\newcommand{\diag}{\mbox{diag}}
\newcommand{\col}{\mbox{col}}
\newtheorem{rem}{Remark}
\newenvironment{proof}{\noindent{\em Proof:\/}}{\hfill $\Box$\par}
\newtheorem{thm}{Theorem}
\newtheorem{lem}{Lemma}
\newtheorem{ass}{Assumption}
\newcommand{\row}{\textnormal{row}}
\newcommand{\myr}{}%\color{red}
\begin{document}\begin{sloppypar}
\begin{frontmatter}
\title{Distributed State Estimation for Jointly Observable Linear Systems over Time-varying Networks \tnoteref{footnoteinfo}}

\tnotetext[t1]{This work was supported by the Natural Sciences and Engineering Research Council (NSERC).}

\author[queen]{Shimin Wang}\ead{shimin.wang@queensu.ca}
\author[queen]{Martin Guay}\ead{martin.guay@queensu.ca}
\address[queen]{Department of Chemical Engineering, Queen's University, Kingston, ON K7L 3N6, Canada.}

\begin{abstract}
This paper deals with a distributed state estimation problem for jointly observable multi-agent systems operated over various time-varying network topologies. The results apply when {\myr the system matrix of the system to be observed contains eigenvalues with positive real parts}. They also can apply to situations where the communication networks are disconnected at every instant. We present sufficient conditions for the existence of distributed observers for general linear systems over periodic communication networks. Using an averaging approach, it is shown that the proposed distributed observer can provide exponentially converging state estimates of the state of the linear system when the network is uniformly connected on average. This average connectedness condition offers a more relaxed assumption that includes periodic switching, Markovian switching and Cox process switching as special cases. All the agents in the network share the estimated state with their neighbours through the network and cooperatively reconstruct the entire state locally. Furthermore, this study presents two exponential stability results for two classes of switched systems, providing valuable tools in related distributed state estimation approaches. A toy example and three practical applications are provided to illustrate the effectiveness of the theoretical results.
\end{abstract}
\begin{keyword}Time-varying networks, Distributed state estimation, Jointly observable systems, Linear time-invariant systems
\end{keyword}
\end{frontmatter}

\section{Introduction}
The cooperative control of complex systems has received tremendous attention in many areas of investigation, such as distributed sensor fusion in sensor networks \citep*{olfati2005consensus,dougherty2016extremum} and unmanned aerial vehicle formations \citep*{bar1995multitarget}, amongst others.
An important class of cooperative control problems is the leader-follower consensus problem \citep*{su2011cooperative,priscoli2015leader}. 
The objective of this problem is to design a distributed control system that synchronizes the state of the followers with the leader's state \citep*{hong2008distributed}.
% This problem is closely related to a number of cooperative control problems.
One major difficulty with this problem is that the state of the leader is generally not directly available for each follower or some followers.
% \citep*{ebegbulem2017distributed}.
This problem requires one to design distributed observer techniques that enable state estimation in large-scale dynamical systems operating over a network of spatially distributed sensors \citep*{liu2023distributed}.
These difficulties from cooperative control and practical demands have motivated the study of distributed state estimation problems, as originally described in \cite{olfati2007distributed,hong2008distributed,su2011cooperative} and \cite{su2012cooperative} and fully developed in \cite{wang2017distributed,mitra2018distributed,han2018simple} and \cite{kim2019completely}.

The distributed state estimation problem aims to design distributed observers for each agent (follower system or distributed sensor) to asymptotically reconstruct the state of the (leader or large-scale) system to be observed by using the local measurements and information obtained from its neighbours \citep*{su2011cooperative,wang2017distributed,mitra2018distributed}. %
The existing research addressing the distributed state estimation problem can be classified into the following four categories that reflect the observability assumptions of the system to be observed and the communication network: 1) Disjointly observable systems \citep*{olfati2007distributed}, 2) Semi-jointly observable systems \citep*{su2011cooperative,su2012cooperative}, 3) Locally jointly observable systems \citep*{mitra2018distributed}, and, 4) Jointly observable systems \citep*{wang2017distributed,han2018simple,kim2019completely,yang2022state,yang2023state}.

The disjoint observability assumption allows each agent to independently reconstruct the state of the system to be observed without interaction or communication, implying that the estimation of each agent does not impact the observability of other agents.
A distributed algorithm was constructed in \cite{olfati2007distributed} using a group of sensors
over a static undirected graph under the disjoint observability assumption with cooperative terms to improve the performance of estimation.
For semi-jointly observable systems, some special agents are equipped with sensors that lead to the observability of the full system to be observed. The rest of the agents that do not meet this observability condition require interactions with these special agents. 
% implies that some followers in the network still have the observability to estimate the state of the to-be-observed system without relying upon other agents
%
The semi-jointly observable assumption has been employed for the solution of many leader-following consensus problems and cooperative tracking control problems \citep*{zhang2011optimal,zhang2012adaptive}.
It is worth mentioning that this assumption has also been used in \cite{su2011cooperative} and \cite{su2012cooperative} who proposed a distributed observer scheme that solves a cooperative output regulation problem. %, which can simultaneously provide each agent with the estimated state of the system to-be-observed.
 The locally jointly observable assumption was initially proposed in \cite{mitra2018distributed} to address situations where each agent and its connected neighbours collectively yield an output matrix that is detectable.
%each node in the network doesn't have the observability to estimate the state of the to-be-observed system without interacting with its directly connected neighbours.% .
%
The jointly observable assumption provides the mildest possible restriction on the system to be observed and the network. This assumption does not require the full observability of any agent in the network. It allows for the reconstruction of the system to be observed via local interactions among agents and local partial measurements. 

The distributed state estimation problem over jointly observable assumption has been solved under various communication conditions for the linear system case in \cite{wang2017distributed,han2018simple,kim2019completely,wang2022distributed,yang2022state,yang2023state} and \cite{liu2023distributed}, and the nonlinear system case in \cite{wu2021design}.
Specifically, a simple Luenberger-type local observer was presented in \cite{kim2019completely} by using the Kalman observable canonical decomposition, which also promotes some variations of the distributed observer as in \cite{han2018simple,jiao2022mathcal,wang2022distributed} and \cite{wang2022aperiodic} for the static communication network.
{\myr For example, \cite{wang2022distributed} propose a split-spectrum-based distributed estimator for the case when the system dynamics are continuous and the network is stationary.}
{\myr 
Moreover, state estimation for a class of linear time-invariant systems with unknown inputs has been considered in \cite{yang2022state}, which is a meaningful and practical consideration.}

Static communication networks are an ideal condition. In real word applications, the communication networks can change dynamically due to various factors such as time-varying environment, communication distance, failures of links and network congestion.
%changes in the environment or links and sensors failures in some real word applications such as sensor networks and power systems.
%. such as sensor networks and power systems,
%
Some efforts have been made to investigate the distributed state estimation problem for continuous-time systems over time-varying graphs \citep*{wang2020distributed,zhang2021distributed,liu2022distributed,yang2023state}.
{\myr The distributed estimation problems for switching networks that are strongly connected at every time with continuous-time and discrete-time systems were tackled in \cite{wang2020distributed} and \cite{wang2022distributed}, respectively.}
In \cite{zhang2021distributed,yang2023state} and \cite{liu2022distributed}, jointly connected switching networks were investigated. This assumption constitutes the weakest requirement on the switching networks. It allows for networks that can be disconnected at every time instant.
{\myr However, we note that the problem considered in \cite{zhang2021distributed,liu2022distributed} and \cite{yang2023state} required the neutral stability of the matrix of the system to be observed, i.e. all the eigenvalues of the system matrix are semi-simple with zero real parts, or all the observer pairs (consisting of the system matrix and the output matrix) are marginally stabilizable, respectively.}

This paper proposes the design of distributed observers for a general linear time-invariant system to estimate the state of the system to be observed subject to jointly observable systems operated over time-varying graphs.
All the agents in the networks cooperate to reconstruct the state of the system to be observed and share the estimated state with their neighbours through local communication. The results allow the observed system to be unstable. The communication network can be disconnected at every instant. This class of networks can include connected static networks or every-time connected switching networks as special cases.
Some sufficient conditions are presented for the existence of distributed observers for general linear systems to be observed over periodic communication networks.
Using an averaging approach, it is shown that the distributed observer can estimate the state of the system to be observed exponentially fast over uniformly connected on average. This approach provides a more relaxed assumption on the network and can include periodic
switching, Markovian switching and, Cox process switching, as special cases.
Two exponential stability results are established for two classes of switched systems.%, providing valuable tools for future research in related distributed state estimation problems.
%
%
%was prompted specifically by the work in \cite{park2016design}
%
%further the lack of detectability
%%
%Nowadays, it is common to .. many people like because besides
%Nowadays there is a growing concern over.. Many people like... while others are inclined to...
%... has been playing an increasingly important role in our day -to-day life. It has brought us a lot of benefits but created some serious problems as well
%
%that is a natural generalization of consensus in which each observer
%
%
%computes its own state estimate of the plant using only
%the portion of the output vector accessible to it and the
%state estimates of other observers that, according to the
%graph, are available to i
%
%achieve consensus and
%
%The framework imposed a distributed observer to
%
% to solve the problem is simultaneously estimate the state of the leader which will be utilized to achieve consensus.
%
%A distributed observer is a dynamic compensator that is able to estimate the state of a leader system and pass it to all
%followers through the communication network of a multi-agent
%system.
%
%stimulus
%
%The distributed observer

%
%
%
%The leader-following consensus problem for the
%
%
%
%In contrast, the distributed control scheme does
%not require the state of the leader system to be accessible to all the
%followers and some type of cooperation among the followers .

%
%
%The natural
%
%The demand of the
%
%The cooperative estimation problems for the state of the leader have been
%
%The state of the leader
%
The rest of the paper is organized as follows. In Section \ref{section2}, we introduce some standard assumptions and lemmas. Section \ref{mainresults} is devoted to the design and analysis of the proposed distributed observers over time-varying networks. This is followed by a simulation study in Section \ref{numerexam}. Four simulation examples are presented which include one toy example and three practical applications. This is followed by brief conclusions in Section~\ref{conlu}.

\textbf{Notation:} Let $\|\cdot\|$ denote both the Euclidean norm of a vector and the Euclidean-induced matrix norm (spectral norm) of a matrix. $\mathds{R}$ is the set of real numbers. $\mathds{N}$ denotes all natural numbers. $I_n$ denotes the $n\times n$ identity matrix. For $A\in \mathds{R}^{m\times n}$, $\textnormal{Ker}(A)=\{x\in \mathds{R}^n|Ax=0\}$ and $\textnormal{Im}(A) =\{y\in \mathds{R}^m | y=Ax \textnormal{~~for~~some~~} x \in \mathds{R}^n\}$ denote the kernel and range of $A$, respectively. For a subspace $\mathcal{V}\subset \mathds{R}^{n}$, the orthogonal complement of $\mathcal{V}$ is denoted as $\mathcal{V}^{\bot}=\{x\in \mathds{R}^{n}|x^Tv=0, \forall v\in \mathcal{V}\}$. $\otimes$ denotes the Kronecker product of matrices.
%For any $P\in \mathds{R}^{n\times n}$, let $\lambda_p$ and $\lambda_P$ denote the minimum and maximum eigenvalues of $P$, respectively.
For $b_i\in \mathds{R}^{n_i \times p}$, $i=1,\dots,m$,
$\col(b_1,\dots,b_m)\triangleq\big[
           \begin{smallmatrix}
             b_1^T & \cdots & b_m^T \\
            \end{smallmatrix}
           \big]^T$. For $a_i\in \mathds{R}^{p \times n_i}$, $i=1,\dots,m$,
$\row(a_1,\dots,a_m)\triangleq\big[
            \begin{smallmatrix}
             a_1 & \cdots & a_m\\
            \end{smallmatrix}
           \big]$.
For $X_1\in \mathds{R}^{n_1\times m_1},\dots,X_k\in \mathds{R}^{n_k\times m_k}$,
\begin{align*}\mbox{diag} (X_1,\dots,X_k)\triangleq\left[
                           \begin{array}{ccc}
                            X_1 &  &  \\
                             & \ddots &  \\
                             &  & X_k\\
                           \end{array}
                          \right].
\end{align*}
For a matrix $P\in \mathds{R}^{n\times n}$, let $\lambda_p$ and $\lambda_P$ denote the minimum and maximum eigenvalues of $P$, respectively.
\section{Problem Formulation and Assumptions}\label{section2}
We consider a linear time-invariant system of the form:
%\begin{equation}
\begin{align}
\dot{x}(t) &= Ax(t),\label{leader}\\
 y_i(t)&=C_ix(t),~~~~i=1,\cdots,N \label{leaderooutput}
 \end{align}
%\end{equation}
where $x(t)\in \mathds{R}^{n}$ is the vector of state variables, and $y_i\in \mathds{R}^{p_i}$ is the output of system \eqref{leader} detected by agent $i$'s sensor, the known matrices $A$ and $C_i$ are of proper dimension, for $i=1,\cdots,N$.

As in \cite{zhang2021distributed} and \cite{liu2022distributed}, it is assumed that the multi-agent system is composed
of $N$ agents.
The network topology of the multi-agent system is described by a switching digraph ${\mathcal{G}}_{\sigma\left(t\right)}=\left(\bar{\mathcal{V}}, {\mathcal{E}}_{\sigma\left(t\right)}\right)$ where $\sigma\left(t\right)$ is a switching signal with a dwelling time $\tau >0$. For the vertices
 ${\mathcal{V}}=\{1,\cdots,N\}$ , we have that $\left(i,j\right)\in{\mathcal{E}}_{\sigma\left(t\right)}$ if and only if $a_{ji}(t)>0$ at time instant $t$. We denote the set of neighbours of agent $i$ at time $t$ as $\mathcal{\bar{N}}_i (t)$. For further details on graph theory, the reader is referred to \cite{zhang2015constructing}.

The objective of this paper is to design distributed observers over time-varying networks to estimate the state of system \eqref{leader} in the sense that the estimation state $\hat{x}_i(t)$ of each agent's observer converges to the state $x(t)$, i.e.,
$$\lim_{t\rightarrow\infty}\left(\hat{x}_i(t)-{x}(t)\right)=0,~~~~i=1,\cdots,N.$$

% Since, for each $i=1,\cdots,N$, $j=0,\cdots,N$, $u_i (t)$ of the form (\ref{discon}) depends on $\varphi_{j} (t)$ only if the agent $j$ is a neighbor of the agent $i$ at time $t$, the control law (\ref{discon}) satisfies the communications constraints.
% Such a control law is called a distributed control law. The specific form of the functions $f_i$ and $g_i$ will be designed later.

Before we proceed, the following assumptions are needed.
\begin{ass}\label{ass0} There exists a {\myr subsequence $\{i_k|k=0,1,\cdots\}$ of $\{i|i=0,1,2,\dots\}$} with $t_{i_{k+1}}-t_{i_k} < \upsilon $ for some positive $\upsilon$ such that the union graph $\bigcup_{j=i_k}^{i_{k+1}-1}{\mathcal{G}}_{\sigma\left(t_j\right)}$ is a strongly connected graph.
\end{ass}
\begin{ass}\label{ass0i}The switching signal $\sigma(t)$ is periodic.
\end{ass}
\begin{rem} Under Assumption \ref{ass0i}, we can assume that the switching signal is as follows:
\begin{align}\label{periodicsw}
\sigma(t)=\begin{cases}
          1, &\text{If $sT \leq t< (s+\omega_1)T$};\\
          2, &\text{If $(s+\omega_1)T \leq t< (s+\sum_{\rho=1}^2\omega_\rho)T$};\\
         \;\vdots & \;\;\;\;\;\;\;\;\;\;\;\;\;\;\;\;\;\;\;\;\;\;\;\;\;\;\;\;{\centering \vdots} \\
          p, &\text{If $(s+\sum_{\rho=1}^{p-1}\omega_\rho)T \leq t< (s+1)T$};
        \end{cases}
\end{align}
where $T$ is a positive constant, $s=0,1,2,\cdots,$ and $\omega_\rho$, $\rho=1,\cdots,p$ are positve constants satisfying $\sum_{\rho=1}^{p}\omega_p=1$. Under Assumptions \ref{ass0} and \ref{ass0i}, for any $k=0,1,2,\cdots$, the union graph $\bigcup_{j=i_k}^{i_{k+1}-1}{\mathcal{G}}_{\sigma\left(t_j\right)}=\bigcup_{\rho=1}^{p}{\mathcal{G}}_{\rho}\equiv{\mathcal{G}}$. Let $\mathcal{L}$ denote the Laplacian matrix associated with graph ${\mathcal{G}}$.
\end{rem}
A more relaxed assumption is the so-called uniformly
connected on average networks property \citep*{wang2010input,kim2013consensus,stilwell2006sufficient}. This assumption is shared by many time-varying networks with switching topologies such as periodic switching, Markovian switching and Cox process switching networks.
\begin{ass} \label{assuniform} A graph $\mathcal{G}_{\sigma(t)}$ with its corresponding Laplacian matrix $\mathcal{L}_{\sigma(t)}$, is said to be connected on average with uniform convergence to the average (for short, uniformly connected on average) if the average Laplacian
 $$\mathcal{L}=\lim_{T\rightarrow\infty}\frac{1}{T}\int_{t}^{t+T}\mathcal{L}_{\sigma(\tau)}d\tau,\;t\geq 0.$$
induces a graph $\mathcal{G}$ that is connected.
\end{ass}
\begin{rem}\label{remark2} The convergence of the limit is uniform with respect to $t$ in the sense that there is a continuous and strictly decreasing function $\delta:[0,\infty)$, called a convergence function, such that $\lim_{T\rightarrow\infty}\delta(T)=0$ and
$$\Big\|\frac{1}{T}\int_{t}^{t+T}\mathcal{L}_{\sigma(\tau)}d\tau-\mathcal{L}\Big\|\leq\delta(T),\;t\geq 0. $$
Motivated by \cite{wang2010input,kim2013consensus} and \cite{stilwell2006sufficient}, we consider an averaging approach to describe the fast-switching network. We consider the time-varying Laplacian matrix of the form $\mathcal{L}_{\sigma(t/\epsilon)}$, where the parameter $\epsilon> 0 $ determines the switching speed of the network topology.
\end{rem}
\begin{ass}\label{ass1} The system defined in \eqref{leader} is jointly observable
\end{ass}

\begin{rem}For $i\in \mathcal{V}$, we assume that the observability index of {\myr $(C_i, A)$} is $v_i$, such that $\textnormal{rank}(\mathcal{O}_i)=v_i$, 
where $\mathcal{O}_i \in \mathds{R}^{np_i\times n}$ is the observability matrix given by: $\mathcal{O}_i=\textnormal{\col}\left(C_i,C_iA,\cdots,C_iA^{n-1}\right)$. 
For $i\in \mathcal{V}$, the observable subspace and unobservable subspace of {\myr $(C_i, A)$} are defined as $\textnormal{Im}(\mathcal{O}_i^T)\subset \mathds{R}^{n}$ and $\textnormal{Ker}(\mathcal{O}_i)\subset \mathds{R}^{n}$, respectively, and satisfy $\textnormal{Ker}(\mathcal{O}_i)^{\bot}=\textnormal{Im}(\mathcal{O}_i^T)$.

For $i\in \mathcal{V}$, let $V_i=\row(V_{ui},V_{oi})\in\mathds{R}^{n\times n}$ be an orthogonal matrix such that $V_iV_i^T=I_n$.
Let $V_{ui}\in\mathds{R}^{n\times (n-v_i)}$ be a matrix such that all columns of $V_{ui}$ are from an orthogonal basis of the $\textnormal{Ker}(\mathcal{O}_i)$ satisfying $\textnormal{Im}(V_{ui})=\textnormal{Ker}(\mathcal{O}_i)$.
Let $V_{oi}\in\mathds{R}^{n\times v_i}$ be a matrix such that all columns of $V_{oi}$ are from an orthogonal basis of the $\textnormal{Im}(\mathcal{O}_i^T)$ satisfying $\textnormal{Im}(V_{oi})=\textnormal{Im}(\mathcal{O}_i^T)$.
%
%It can be verified that $V_{oi}^TAV_{ui}$ and $V_{oi}^TAV_{ui}=0$

%
For $i\in \mathcal{V}$, we use the Kalman observability decomposition to express the matrices $A$ and $C_i$ of the system in \eqref{leader} as follows:
\begin{subequations}\label{decom}\begin{align}
V_i^{T}AV_i%=&\left[
%      \begin{array}{cc}
%       V_{ui}^TAV_{ui} & V_{ui}^TAV_{oi} \\
%       0 & V_{oi}^TAV_{oi} \\
%      \end{array}
%     \right],~~~C_iV_i=\left[
%               \begin{array}{cc}
%                0 & C_iV_{oi} \\
%               \end{array}
%              \right]\nonumber\\
=&\left[
      \begin{array}{cc}
       A_{ui} & A_{ri} \\
       0_{v_i\times (n-v_i)} & A_{oi} \\
      \end{array}
     \right],\\
C_iV_i=&\left[
               \begin{array}{cc}
                0_{p_i\times (n-v_i)} & C_{oi} \\
               \end{array}
              \right],
\end{align}\end{subequations}
where the pair {\myr $(C_{oi}, A_{oi})$} is observable, $A_{oi}\in \mathds{R}^{v_i\times v_i}$, $ A_{ri}\in \mathds{R}^{(n-v_i)\times v_i}$, $A_{ui}\in \mathds{R}^{(n-v_i)\times (n-v_i)}$ and $ C_{oi} \in \mathds{R}^{p_i\times v_i}$ admit the following matrix decomposition: $A_{ui}=V_{ui}^TAV_{ui}$, $A_{ri}=V_{ui}^TAV_{oi}$, $A_{oi}=V_{oi}^TAV_{oi}$ and $C_{oi}=C_iV_{oi}$. \end{rem}

 Let $C_{o}=\textnormal{diag}({C_{o1}},\cdots,{C_{oN}})$, $A_{r}=\textnormal{diag}({A_{r1}},\cdots,{A_{rN}})$, $V_{o}=\textnormal{diag}(V_{o1},\cdots,V_{oN})$, $V_{u}=\textnormal{diag}(V_{u1},\cdots,V_{uN})$, $A_{o}=\textnormal{diag}({A_{o1}},\cdots,{A_{oN}})$ and $A_{u}=\textnormal{diag}({A_{u1}},\cdots,{A_{uN}})$.

Next, we state some useful lemmas proposed in \cite{sun2006switched,kim2019completely} and \cite{qu2009cooperative}
\begin{lem} \citep*{sun2006switched}\label{lemmasun} Consider the linear switched system given by:
\begin{align}\label{lsitch}
\dot{y}=A_{\sigma(t)}y
\end{align}
where $y\in \mathds{R}^{n}$ is the vector of state variables, $\sigma:[0, \infty)\mapsto \mathcal{P}=\{1,2,\cdots,p\}$ is the switching signal satisfying Assumption \ref{ass0i}, and $A_\rho\in \mathds{R}^{n\times n}$, $\rho=1\cdots,p$. If the matrix $\sum_{\rho=1}^{p}\omega_pA_p$ is Hurwitz with $\omega_\rho$, $\rho=1,\cdots,p$, obtained from \eqref{periodicsw}, then there exists a positive constant $\bar{T}_0$ such that for $0<T<\bar{T}_0$, the origin of system \eqref{lsitch} is exponentially stable.
\end{lem}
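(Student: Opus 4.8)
The plan is to reduce the periodic switched system to a sampled linear time-invariant recursion governed by the monodromy (period-to-period transition) matrix, and then apply an averaging argument to that matrix. Because $\sigma$ has the periodic structure \eqref{periodicsw}, the state transition matrix over any full period $[sT,(s+1)T)$ is independent of $s$ and equals
$$\Phi(T)=e^{A_p\omega_p T}e^{A_{p-1}\omega_{p-1}T}\cdots e^{A_1\omega_1 T},$$
so that $y\big((s+1)T\big)=\Phi(T)\,y(sT)$ for every $s\in\mathds{N}$. First I would establish a Taylor expansion of $\Phi(T)$ about $T=0$: writing $e^{A_\rho\omega_\rho T}=I_n+\omega_\rho T A_\rho+R_\rho(T)$ with $\|R_\rho(T)\|\le\tfrac12(LT)^2e^{LT}$, $L:=\max_\rho\|A_\rho\|$, multiplying out the $p$ factors, and using $\sum_{\rho=1}^p\omega_\rho=1$, one gets
$$\Phi(T)=I_n+T\bar A+E(T),\qquad \bar A:=\sum_{\rho=1}^p\omega_\rho A_\rho,\qquad \|E(T)\|\le c_0T^2\ \ (0<T\le1),$$
with $c_0$ depending only on $p$ and $L$.

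Next I would exploit that $\bar A$ is Hurwitz by hypothesis. Let $P=P^T\succ0$ solve the Lyapunov equation $\bar A^TP+P\bar A=-I_n$, and take $W(y)=y^TPy$. Expanding the quadratic form and substituting the expansion of $\Phi(T)$,
$$\Phi(T)^TP\Phi(T)-P=T\big(\bar A^TP+P\bar A\big)+\Big(T^2\bar A^TP\bar A+PE(T)+E(T)^TP+\text{higher-order terms}\Big),$$
and the bracketed remainder is bounded in norm by $c_1T^2$ for $0<T\le1$, with $c_1$ depending on $\lambda_P$, $\|\bar A\|$ and $c_0$. Hence there is $\bar T_0:=\min\{1,1/(2c_1)\}$ such that for all $0<T<\bar T_0$,
$$\Phi(T)^TP\Phi(T)-P\preceq\big(-T+c_1T^2\big)I_n\preceq-\tfrac{T}{2}I_n.$$
Consequently $W\big(y((s+1)T)\big)\le\big(1-\eta T\big)W\big(y(sT)\big)$ with $\eta:=1/(2\lambda_P)>0$, so $W$, and therefore $\|y(\cdot)\|$, decays geometrically along the sampling sequence $\{sT\}_{s\in\mathds{N}}$.

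It remains to control the behavior between sampling instants. For $t\in[sT,(s+1)T)$ the differential inequality $\tfrac{d}{dt}\|y\|\le L\|y\|$ gives $\|y(t)\|\le e^{LT}\|y(sT)\|\le e^{L\bar T_0}\|y(sT)\|$; combining this with the geometric decay at $t=sT$ and with $\lambda_p\|y\|^2\le W(y)\le\lambda_P\|y\|^2$ yields a bound $\|y(t)\|\le Me^{-\alpha t}\|y(0)\|$ for constants $M,\alpha>0$ independent of $y(0)$, i.e. exponential stability of the origin.

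I expect the main obstacle to be the bookkeeping in the second paragraph: producing a clean, quantitative remainder bound $\|E(T)\|\le c_0T^2$ after multiplying out the product of $p$ matrix exponentials while correctly isolating the first-order term $T\bar A$ (which is precisely where $\sum_\rho\omega_\rho=1$ enters), and then propagating that estimate through $\Phi(T)^TP\Phi(T)-P$ so that the threshold $\bar T_0$ is made explicit in terms of $\lambda_P$, $\|\bar A\|$ and $L$. The rest is routine Lyapunov and sampled-data reasoning.
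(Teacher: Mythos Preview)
The paper does not actually prove this lemma; it is stated as a cited result from \cite{sun2006switched}, so there is no ``paper's own proof'' to compare against. Your monodromy/Floquet-type argument is correct and is essentially the classical averaging proof one finds in the cited reference and in standard switched-systems texts: expand the one-period transition matrix $\Phi(T)=e^{A_p\omega_pT}\cdots e^{A_1\omega_1T}$ to first order in $T$, use the Lyapunov equation for the Hurwitz average $\bar A=\sum_\rho\omega_\rho A_\rho$ to obtain a discrete-time contraction $\Phi(T)^TP\Phi(T)\preceq P-\tfrac{T}{2}I_n$, and bridge between sampling instants with a Gronwall bound. The bookkeeping you flag as the main obstacle is indeed routine: the $O(T^2)$ remainder in the product of $p$ exponentials follows by a short induction on $p$ (or by directly bounding every cross term by $c(p,L)T^2$), and the constants $c_0,c_1,\bar T_0$ can be made explicit exactly as you indicate.
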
 
\begin{lem}\label{meitopol}\citep*{zhang2015constructing} Suppose that the communication network $\mathcal{G}=(\mathcal{V},\mathcal{E})$ is strongly connected. Let $\theta=\col(\theta_1,\cdots,\theta_N)$ be the left eigenvector of the Laplacian matrix $\mathcal{L}$ associated with the eigenvalue $0$, i.e., $\mathcal{L}^T\theta=0$. Then,
 $\Theta =\textnormal{\diag}(\theta_1,\cdots,\theta_N)>0$ and
$\hat{\mathcal{L}}=\Theta \mathcal{L}+\mathcal{L}^T\Theta\geq0$.
\end{lem}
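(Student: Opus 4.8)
The plan is to establish the two assertions separately: first that a left null vector $\theta$ of $\mathcal{L}$ can be chosen entrywise positive, so that $\Theta=\diag(\theta_1,\dots,\theta_N)>0$, and then that the symmetrized matrix $\hat{\mathcal{L}}=\Theta\mathcal{L}+\mathcal{L}^T\Theta$ is positive semidefinite.

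For the first assertion I would use the sign structure of a digraph Laplacian: $\mathcal{L}_{ii}=\sum_{k}a_{ik}\ge 0$, $\mathcal{L}_{ij}=-a_{ij}\le 0$ for $i\ne j$, and $\mathcal{L}\mathbf{1}=0$, where $\mathbf{1}$ is the all-ones vector. Pick $c>0$ large enough that $M:=cI_N-\mathcal{L}^T$ is entrywise nonnegative. Since $\mathcal{G}$ is strongly connected, $\mathcal{L}$ — hence $\mathcal{L}^T$, hence $M$ (shifting by $cI_N$ does not change the off-diagonal zero/nonzero pattern) — is irreducible, so by the Perron–Frobenius theorem $M$ has a strictly positive eigenvector associated with its spectral radius, and that eigenvalue is simple. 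Because every eigenvalue of $\mathcal{L}$ has nonnegative real part with $0$ among them, the spectral radius of $M$ equals $c$ and is attained at the eigenvalue $0$ of $\mathcal{L}$; thus the Perron eigenvector is precisely a left null vector of $\mathcal{L}$. Normalizing it gives $\theta_i>0$ for all $i$ (and, as a byproduct, that $0$ is a simple eigenvalue of $\mathcal{L}$), so $\Theta>0$.

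For the second assertion I would argue directly on the matrix $\hat{\mathcal{L}}$. It is symmetric by construction. Using $\mathcal{L}\mathbf{1}=0$ and $\mathcal{L}^T\theta=0$ one gets $\hat{\mathcal{L}}\mathbf{1}=\Theta\mathcal{L}\mathbf{1}+\mathcal{L}^T\Theta\mathbf{1}=\mathcal{L}^T\theta=0$, so $\hat{\mathcal{L}}$ has zero row sums. Its off-diagonal entries are $\hat{\mathcal{L}}_{ij}=\theta_i\mathcal{L}_{ij}+\mathcal{L}_{ji}\theta_j=-\theta_ia_{ij}-\theta_ja_{ji}\le 0$ for $i\ne j$, since $\theta_i,\theta_j>0$ and $a_{ij},a_{ji}\ge 0$. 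A symmetric matrix with zero row sums and nonpositive off-diagonal entries is itself a weighted Laplacian of an undirected graph; explicitly $\hat{\mathcal{L}}=\sum_{i<j}(-\hat{\mathcal{L}}_{ij})(e_i-e_j)(e_i-e_j)^T$, so $x^T\hat{\mathcal{L}}x=\sum_{i<j}(-\hat{\mathcal{L}}_{ij})(x_i-x_j)^2\ge 0$ for all $x\in\mathds{R}^N$, i.e. $\hat{\mathcal{L}}\ge 0$. (Alternatively one may invoke Gershgorin's disc theorem together with weak diagonal dominance and nonnegativity of the diagonal.)

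The step I expect to need the most care is the strict positivity of $\theta$, since this is the only place where strong connectivity is genuinely used, and it must be routed through the Perron–Frobenius machinery applied to the shifted nonnegative matrix $cI_N-\mathcal{L}^T$ (or, as an alternative, through the matrix–tree theorem, which identifies each $\theta_i$ with a sum of positive spanning-in-tree weights rooted at vertex $i$). Once $\Theta>0$ is in hand, the semidefiniteness of $\hat{\mathcal{L}}$ is elementary bookkeeping with the row-sum and sign properties above.
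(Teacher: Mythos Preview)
Your argument is correct. The Perron--Frobenius step establishing $\theta_i>0$ is sound (one small caveat: your sentence ``the spectral radius of $M$ equals $c$'' is justified not by the real-part inequality you cite but by observing that $M^T\mathbf{1}=(cI-\mathcal{L})\mathbf{1}=c\mathbf{1}$, so $\mathbf{1}>0$ is a Perron vector of $M^T$ with eigenvalue $c$, forcing $\rho(M)=\rho(M^T)=c$). The symmetrized-Laplacian argument for $\hat{\mathcal{L}}\ge 0$ is clean and complete.

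As for comparison with the paper: there is nothing to compare. The paper does not prove this lemma at all; it merely quotes it from \cite{zhang2015constructing} and moves on. Your write-up therefore supplies a self-contained proof where the paper offers only a citation, which is strictly more than what the paper does here.
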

\begin{lem}\label{kimPD}\citep*{kim2019completely} Suppose that the communication network $\mathcal{G}=(\mathcal{V},\mathcal{E})$ is strongly connected. Then, the following statements are equivalent:
\begin{enumerate}
 \item System \eqref{leader} is jointly observable;
 \item The matrix $V_{u}^T\big(\hat{\mathcal{L}}\otimes I_n\big)V_{u}$ is positive definite;
 \item The matrix $V_{u}^T\left({\mathcal{L}}\otimes I_n\right)V_{u}$ is nonsingular.
\end{enumerate}
\end{lem}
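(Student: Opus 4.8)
The plan is to establish the three equivalences through the cycle $(1)\Rightarrow(2)\Rightarrow(3)\Rightarrow(1)$, using only the block structure of $V_u=\diag(V_{u1},\dots,V_{uN})$ together with two facts that follow from strong connectivity of $\mathcal{G}$: (i) $\mathcal{L}\mathds{1}_N=0$ and $\mathcal{L}^{T}\theta=0$; and (ii) since the off-diagonal entries of $\hat{\mathcal{L}}=\Theta\mathcal{L}+\mathcal{L}^{T}\Theta$ are nonpositive, its row sums vanish (because $\mathcal{L}\mathds{1}_N=0$, $\mathcal{L}^{T}\theta=0$), and it is symmetric and positive semidefinite by Lemma \ref{meitopol}, $\hat{\mathcal{L}}$ is the weighted Laplacian of the symmetrized (mirror) graph of $\mathcal{G}$, which is connected; hence $\textnormal{Ker}(\hat{\mathcal{L}})=\textnormal{span}\{\mathds{1}_N\}$, and therefore $\hat{\mathcal{L}}\otimes I_n\ge 0$ with $\textnormal{Ker}(\hat{\mathcal{L}}\otimes I_n)=\{\mathds{1}_N\otimes v:v\in\mathds{R}^n\}$. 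Because the columns of each $V_{ui}$ form an orthonormal basis of $\textnormal{Ker}(\mathcal{O}_i)$, the matrix $V_u$ has full column rank, $\textnormal{Im}(V_u)=\textnormal{Ker}(\mathcal{O}_1)\times\cdots\times\textnormal{Ker}(\mathcal{O}_N)$, and for $z=\col(z_1,\dots,z_N)$ one has $V_uz=\mathds{1}_N\otimes v$ for a common $v$ if and only if $v\in\bigcap_{i=1}^N\textnormal{Ker}(\mathcal{O}_i)$ with $z_i=V_{ui}^{T}v$; in particular a nonzero such $z$ exists precisely when the system fails to be jointly observable.

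For $(1)\Rightarrow(2)$: the matrix $V_u^{T}(\hat{\mathcal{L}}\otimes I_n)V_u$ is symmetric and, since $\hat{\mathcal{L}}\otimes I_n\ge 0$, positive semidefinite; hence it is positive definite if and only if $\textnormal{Im}(V_u)\cap\textnormal{Ker}(\hat{\mathcal{L}}\otimes I_n)=\{0\}$. Since $\textnormal{Ker}(\hat{\mathcal{L}}\otimes I_n)=\{\mathds{1}_N\otimes v:v\in\mathds{R}^n\}$, the observation above shows this intersection is trivial exactly when the system is jointly observable. This already yields $(1)\Leftrightarrow(2)$, and in particular $(1)\Rightarrow(2)$.

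For $(2)\Rightarrow(3)$: write $M:=V_u^{T}(\mathcal{L}\otimes I_n)V_u$ and $\tilde{\Theta}:=\diag(\theta_1 I_{n-v_1},\dots,\theta_N I_{n-v_N})>0$. A one-line block computation gives $(\Theta\otimes I_n)V_u=V_u\tilde{\Theta}$, both sides having $i$-th block $\theta_i V_{ui}$, from which, using $\hat{\mathcal{L}}=\Theta\mathcal{L}+\mathcal{L}^{T}\Theta$,
\[
V_u^{T}(\hat{\mathcal{L}}\otimes I_n)V_u=V_u^{T}(\Theta\otimes I_n)(\mathcal{L}\otimes I_n)V_u+V_u^{T}(\mathcal{L}^{T}\otimes I_n)(\Theta\otimes I_n)V_u=\tilde{\Theta}M+M^{T}\tilde{\Theta}.
\]
Thus $(2)$ is precisely the Lyapunov inequality $\tilde{\Theta}M+M^{T}\tilde{\Theta}>0$; if $Mz=0$ then $z^{T}(\tilde{\Theta}M+M^{T}\tilde{\Theta})z=0$, forcing $z=0$, so $M$ is nonsingular.

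For $(3)\Rightarrow(1)$ I argue by contraposition: if the system is not jointly observable, choose $0\ne v\in\bigcap_i\textnormal{Ker}(\mathcal{O}_i)$ and set $z:=\col(V_{u1}^{T}v,\dots,V_{uN}^{T}v)\ne 0$, so that $V_uz=\mathds{1}_N\otimes v$ and $(\mathcal{L}\otimes I_n)V_uz=(\mathcal{L}\mathds{1}_N)\otimes v=0$; hence $Mz=0$ and $M$ is singular, contradicting $(3)$. This closes the cycle and proves the equivalence. The only step requiring genuine care is fact (ii), the identification of $\textnormal{Ker}(\hat{\mathcal{L}})$ under strong connectivity; everything else is bookkeeping with Kronecker products and the block-diagonal structure of $V_u$ and $\Theta\otimes I_n$.
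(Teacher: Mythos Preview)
The paper does not prove this lemma; it is quoted from \cite{kim2019completely} without argument, so there is no ``paper's own proof'' to compare against. Your proof is correct and self-contained: the identification $\textnormal{Ker}(\hat{\mathcal{L}})=\textnormal{span}\{\mathds{1}_N\}$ via the mirror-graph Laplacian is the right way to handle the directed case, the block identity $(\Theta\otimes I_n)V_u=V_u\tilde{\Theta}$ cleanly reduces $(2)$ to a Lyapunov inequality for $M$, and the contrapositive for $(3)\Rightarrow(1)$ uses exactly the fact that $V_{ui}V_{ui}^T$ projects onto $\textnormal{Ker}(\mathcal{O}_i)$, so $V_uz=\mathds{1}_N\otimes v$ with $z\ne 0$. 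This is essentially the argument in \cite{kim2019completely}, with the added bonus that your derivation of $\tilde{\Theta}M+M^{T}\tilde{\Theta}=V_u^{T}(\hat{\mathcal{L}}\otimes I_n)V_u$ makes the link between statements (2) and (3) more explicit than is usually written out.
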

 Under Assumption \ref{ass0} and \ref{ass1}, the matrix $V_{u}^T(\mathcal{\hat{L}}\otimes I_n)V_{u}$ is positive definite matrix from Lemma \ref{kimPD}. Let $\lambda_{l}$ and $\lambda_{L}$ denote the minimum and maximum eigenvalues of $V_{u}^T(\mathcal{\hat{L}}\otimes I_n)V_{u}$, respectively. Let $V_{u}=\textnormal{diag}(V_{u1},\cdots,V_{uN})$ and $A_{u}=\textnormal{diag}({A_{u1}},\cdots,{A_{uN}})$. Then, we establish the following lemmas.
 \begin{lem}\label{lemmaexp} Under Assumption \ref{ass1}, consider the following linear time-invariant system
 \begin{align}\label{compactformii}
\dot{z} =&\big[A_{u}-\gamma V_{u}^T\left(\mathcal{L}\otimes I_n\right)V_{u}\big]z,
\end{align}
where $z\in \mathds{R}^{\sum_{i=1}^{N}(n-\nu_i)}$. Suppose that the communication network $\mathcal{G}$ is strongly connected. Then, the system \eqref{compactformii} satisfies the following properties:
\begin{enumerate}
\item System \eqref{compactformii} is globally asymptotically stable for a sufficiently large value of $\gamma$.
\item The matrix $A_{u}-\gamma V_{u}^T\left(\mathcal{L}\otimes I_n\right)V_{u}$ is Hurwitz for a sufficiently large value of $\gamma$.
% \item All the eigenvalues of $V_{u}^T\left(\mathcal{L}\otimes I_n\right)V_{u}$ contain the positive real parts.
\end{enumerate}
 \end{lem}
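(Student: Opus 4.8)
The plan is to exhibit a single quadratic Lyapunov function that decays for all sufficiently large $\gamma$. Since \eqref{compactformii} is linear and time-invariant, statements 1 and 2 are equivalent (a linear system is globally asymptotically stable if and only if its system matrix is Hurwitz), so it suffices to establish statement 1. First I would use Lemma \ref{meitopol}: as $\mathcal{G}$ is strongly connected, the left eigenvector $\theta=\col(\theta_1,\dots,\theta_N)$ of $\mathcal{L}$ for the eigenvalue $0$ has positive entries, $\Theta=\diag(\theta_1,\dots,\theta_N)>0$, and $\hat{\mathcal{L}}=\Theta\mathcal{L}+\mathcal{L}^T\Theta\ge 0$. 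I then set $\bar{\Theta}=V_u^T(\Theta\otimes I_n)V_u$. Because $V_u=\diag(V_{u1},\dots,V_{uN})$ and each $V_{ui}$ has orthonormal columns ($V_{ui}^TV_{ui}=I$), this reduces to $\bar{\Theta}=\diag(\theta_1 I,\dots,\theta_N I)>0$; in particular the identity $\bar{\Theta}V_u^T=V_u^T(\Theta\otimes I_n)$ holds because $V_u$ and $\Theta\otimes I_n$ share a compatible block-diagonal structure.

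Next I would take $W(z)=z^T\bar{\Theta}z$, which satisfies $(\min_i\theta_i)\|z\|^2\le W(z)\le(\max_i\theta_i)\|z\|^2$, and differentiate it along \eqref{compactformii}:
\begin{align*}
\dot W = z^T\big(\bar{\Theta}A_u+A_u^T\bar{\Theta}\big)z-\gamma z^T\big(\bar{\Theta}V_u^T(\mathcal{L}\otimes I_n)V_u+V_u^T(\mathcal{L}^T\otimes I_n)V_u\bar{\Theta}\big)z.
\end{align*}
Invoking $\bar{\Theta}V_u^T=V_u^T(\Theta\otimes I_n)$ and its transpose $V_u\bar{\Theta}=(\Theta\otimes I_n)V_u$, together with the mixed-product rule for Kronecker products, the bracketed term collapses to $V_u^T\big((\Theta\mathcal{L}+\mathcal{L}^T\Theta)\otimes I_n\big)V_u=V_u^T(\hat{\mathcal{L}}\otimes I_n)V_u$. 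By Assumption \ref{ass1} and Lemma \ref{kimPD}, this matrix is positive definite with minimum eigenvalue $\lambda_l>0$. Hence, writing $c_0:=\|\bar{\Theta}A_u+A_u^T\bar{\Theta}\|$, a fixed constant independent of $\gamma$, we obtain $\dot W\le (c_0-\gamma\lambda_l)\|z\|^2$.

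Finally, choosing any $\gamma>c_0/\lambda_l$ gives $\dot W\le-(\gamma\lambda_l-c_0)\|z\|^2\le-\frac{\gamma\lambda_l-c_0}{\max_i\theta_i}W$, so $W(z(t))$, and therefore $z(t)$, converges to zero exponentially from every initial condition. This proves statement 1 — in fact global exponential stability — and statement 2 follows at once. The step that deserves care is the identity $\bar{\Theta}V_u^T=V_u^T(\Theta\otimes I_n)$, which lets the diagonal weighting pass through $V_u^T$ and turn the non-symmetric $\mathcal{L}$ into the symmetric $\hat{\mathcal{L}}$; this uses only the block-diagonal forms of $V_u$ and $\Theta\otimes I_n$, and one must resist using $V_uV_u^T=I_{nN}$, which is false since $V_uV_u^T$ is merely an orthogonal projection. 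The remaining bounds and the choice of $\gamma$ are routine.
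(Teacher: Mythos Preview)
Your proof is correct and follows essentially the same approach as the paper: both use the Lyapunov function $V(z)=\sum_i\theta_i z_i^Tz_i=z^T\bar\Theta z$ with $\bar\Theta=\diag(\theta_1 I,\dots,\theta_N I)$, symmetrize the Laplacian term into $V_u^T(\hat{\mathcal L}\otimes I_n)V_u$, invoke Lemma~\ref{kimPD} for its positive definiteness, and then choose $\gamma$ large enough to dominate the $A_u$ contribution. Your presentation is a bit more explicit about the block-diagonal identity $\bar\Theta V_u^T=V_u^T(\Theta\otimes I_n)$ that justifies the symmetrization, and you bound the $A_u$ term by $c_0=\|\bar\Theta A_u+A_u^T\bar\Theta\|$ rather than the paper's coarser $2\theta_M\|A\|$, but these are cosmetic differences.
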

 \begin{proof} Define the following Lyapunov function candidate for system \eqref{compactformii}
\begin{align}\label{lyapu}
V(z)=\sum\nolimits_{i=1}^{N}\theta_iz_i^T z_i.
\end{align}
Then,
\begin{equation}\label{temp}
\theta_m\|z\|^2\leq V(z)\leq\theta_M\|z\|^2,
\end{equation}
where $\theta_m=\min\{\theta_1,\cdots,\theta_N\}$ and $\theta_M=\max\{\theta_1,\cdots,\theta_N\}$.
The time derivative of $V(z)$ along (\ref{compactformii}) can be evaluated as
\begin{align}
\dot{V} =&2\sum\nolimits_{i=1}^{N}\theta_i z_i^T A_{ui}z_i -\gamma z^T\big[V_{u}^T(\mathcal{\hat{L}}\otimes I_n)V_{u}\big]z\nonumber\\
\leq&2\theta_M\|A\|\|z\|^2- \gamma \lambda_l\|z\|^2.\nonumber
\end{align}
It is easily to verified that $\|A_{ui}\|\leq\|A\|$, for $i=1,\cdots,N$. From \eqref{compactformii} and \eqref{temp}, we have
\begin{align}
\dot{V} \leq&\frac{2\theta_M^2\|A\|-\gamma\lambda_l\theta_m}{\theta_M\theta_m}V.\nonumber
\end{align}
Hence, for any $\gamma>\frac{2\theta_M^2\|A\|}{\lambda_l\theta_m}$, system \eqref{compactformii} is globally asymptotically stable (exponentially stable). From the Theorem 4.5 in \cite{khalil2002nonlinear}, the matrix $A_{u}-\gamma V_{u}^T\left(\mathcal{L}\otimes I_n\right)V_{u}$ is Hurwitz for sufficiently large enough $\gamma$. 
%More interesting, if $A=0$, it easily to conclude that $$A_{u}-\gamma V_{u}^T\left(\mathcal{L}\otimes I_n\right)V_{u}=-\gamma V_{u}^T\left(\mathcal{L}\otimes I_n\right)V_{u}$$ is Hurwitz for any positive $\gamma$. Thus, all the eigenvalues of $V_{u}^T\left(\mathcal{L}\otimes I_n\right)V_{u}$ contain the positive real parts.
 \end{proof}
  \begin{lem}\label{avlemmaexp0}Under Assumptions \ref{ass0}, \ref{ass0i} and \ref{ass1}, consider the following linear system
 \begin{align}\label{avcompactformiipe}
\dot{z} =&\big[A_{u}-\gamma V_{u}^T\left(\mathcal{L}_{\sigma(t)}\otimes I_n\right)V_{u}\big]z,
\end{align}
where $z=\textnormal{\col}(z_1,\cdots,z_N)$ with $z_i\in \mathds{R}^{n-\nu_i}$. Then there exist some constants $\bar{\gamma}_0$ and $\bar{T}_0$, such that, for all $\gamma\geq \bar{\gamma}_0$ and $0<T<\bar{T}_0$, the system \eqref{avcompactformiipe} is exponentially stable at the origin.
 \end{lem}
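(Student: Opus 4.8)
The plan is to read \eqref{avcompactformiipe} as a periodic linear switched system $\dot z = A_{\sigma(t)}z$ whose subsystem matrices are $A_\rho := A_u - \gamma V_u^T(\mathcal{L}_\rho\otimes I_n)V_u$, $\rho=1,\dots,p$, with $\mathcal{L}_\rho$ the Laplacian of $\mathcal{G}_\rho$, and then to invoke Lemma \ref{lemmasun} (whose periodicity hypothesis is exactly Assumption \ref{ass0i}). By that lemma, it suffices to exhibit a threshold $\bar\gamma_0$ such that, for every fixed $\gamma\geq\bar\gamma_0$, the $\omega$-weighted average $\bar A(\gamma):=\sum_{\rho=1}^p\omega_\rho A_\rho$ is Hurwitz; Lemma \ref{lemmasun} then supplies a $\bar T_0>0$ below which \eqref{avcompactformiipe} is exponentially stable.

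A direct computation gives $\bar A(\gamma)=A_u-\gamma V_u^T(\bar{\mathcal{L}}\otimes I_n)V_u$, where $\bar{\mathcal{L}}:=\sum_{\rho=1}^p\omega_\rho\mathcal{L}_\rho$. Since the $\omega_\rho$ are positive and sum to one, $\bar{\mathcal{L}}$ is the (weighted) Laplacian of the digraph with vertex set $\mathcal{V}$ and edge set $\bigcup_{\rho=1}^p\mathcal{E}_\rho$, i.e. of the union graph $\bigcup_{\rho=1}^p\mathcal{G}_\rho=\mathcal{G}$, which is strongly connected under Assumptions \ref{ass0} and \ref{ass0i} (cf. the Remark following Assumption \ref{ass0i}). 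Hence the structural facts underlying Lemmas \ref{meitopol}--\ref{lemmaexp} apply with $\bar{\mathcal{L}}$ in place of $\mathcal{L}$: there is a positive vector $\bar\theta=\col(\bar\theta_1,\dots,\bar\theta_N)$ with $\bar{\mathcal{L}}^T\bar\theta=0$, the matrix $\bar\Theta=\diag(\bar\theta_1,\dots,\bar\theta_N)$ is positive definite and $\widehat{\bar{\mathcal{L}}}:=\bar\Theta\bar{\mathcal{L}}+\bar{\mathcal{L}}^T\bar\Theta\geq 0$; and, since \eqref{leader} is jointly observable (Assumption \ref{ass1}), Lemma \ref{kimPD} gives $V_u^T(\widehat{\bar{\mathcal{L}}}\otimes I_n)V_u>0$.

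With these ingredients the argument of Lemma \ref{lemmaexp} goes through essentially verbatim: taking $V(z)=\sum_{i=1}^N\bar\theta_iz_i^Tz_i$ as a Lyapunov candidate, using $\|A_{ui}\|\leq\|A\|$ and writing $\bar\lambda_l>0$ for the minimum eigenvalue of $V_u^T(\widehat{\bar{\mathcal{L}}}\otimes I_n)V_u$ and $\bar\theta_m,\bar\theta_M$ for the extreme entries of $\bar\theta$, one obtains $\dot V\leq\big(2\bar\theta_M^2\|A\|-\gamma\bar\lambda_l\bar\theta_m\big)/(\bar\theta_M\bar\theta_m)\,V$; hence, by Theorem~4.5 in \cite{khalil2002nonlinear}, $\bar A(\gamma)$ is Hurwitz for every $\gamma\geq\bar\gamma_0:=2\bar\theta_M^2\|A\|/(\bar\lambda_l\bar\theta_m)$. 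Applying Lemma \ref{lemmasun} to $A_1,\dots,A_p$ then yields the required $\bar T_0$, which finishes the proof.

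The step needing the most care is the interplay of the two parameters $\gamma$ and $T$. Lemma \ref{lemmasun} only provides a bound $\bar T_0$ depending on the subsystem matrices $A_1,\dots,A_p$, whose norms grow with $\gamma$, so a priori $\bar T_0$ depends on $\gamma$; the statement should therefore be read as: for each $\gamma\geq\bar\gamma_0$ there is $\bar T_0=\bar T_0(\gamma)>0$ with the asserted property (e.g. one may simply fix $\gamma=\bar\gamma_0$). One could attempt to mitigate this by rescaling time by $\gamma$, which turns \eqref{avcompactformiipe} into a switched system of period $\gamma T$ with subsystem matrices $\gamma^{-1}A_u-V_u^T(\mathcal{L}_\sigma\otimes I_n)V_u$ of norm bounded uniformly in $\gamma\geq\bar\gamma_0$ and with a uniformly Hurwitz average, and then invoke Lemma \ref{lemmasun} for the rescaled system; but translating back still leaves $\bar T_0$ scaling like $1/\gamma$, so genuine uniformity over $\gamma\geq\bar\gamma_0$ is not to be expected. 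Everything else — identifying $\bar{\mathcal{L}}$ as a strongly connected Laplacian, the Lyapunov inequality, and the explicit constants — is routine.
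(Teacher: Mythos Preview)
Your argument is correct and follows essentially the same route as the paper: compute the $\omega$-weighted average of the subsystem matrices, identify it as $A_u-\gamma V_u^T(\bar{\mathcal L}\otimes I_n)V_u$ with $\bar{\mathcal L}$ the Laplacian of the strongly connected union graph, invoke Lemma~\ref{lemmaexp} (or, as you do, re-run its Lyapunov estimate) to make the average Hurwitz for $\gamma\geq\bar\gamma_0$, and then apply Lemma~\ref{lemmasun}. The paper simply cites Lemma~\ref{lemmaexp} directly rather than repeating the calculation, and it silently identifies $\sum_\rho\omega_\rho\mathcal L_\rho$ with the $\mathcal L$ of Remark~1; your explicit distinction between the weighted average $\bar{\mathcal L}$ and an unweighted union Laplacian, together with your closing remark that $\bar T_0$ necessarily depends on $\gamma$ through Lemma~\ref{lemmasun}, are points the paper does not make but which sharpen the statement.
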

 \begin{proof}
Under Assumptions \ref{ass0} and \ref{ass0i}, for any $k=0,1,2,\cdots$, the union graph $\bigcup_{j=i_k}^{i_{k+1}-1}{\mathcal{G}}_{\sigma\left(t_j\right)}=\bigcup_{\rho=1}^{p}{\mathcal{G}}_{\rho}={\mathcal{G}}$
is a strongly connected directed graph. Clearly, $\mathcal{L}=\sum_{\rho=1}^{p}\omega_\rho{\mathcal{L}}_{\rho}$. Then, form $\sum_{\rho=1}^{p}\omega_\rho=1$, we have
$$\sum_{\rho=1}^{p}\omega_\rho\big[A_{u}-\gamma V_{u}^T\left(\mathcal{L}_{\rho}\otimes I_n\right)V_{u}\big]=A_{u}-\gamma V_{u}^T\left(\mathcal{L}\otimes I_n\right)V_{u}.$$
Thus, from Lemma \ref{lemmaexp}, there exists $\bar{\gamma}_0$ such that $A_{u}-\gamma V_{u}^T\left(\mathcal{L}\otimes I_n\right)V_{u}$ is is Hurwitz for any $\gamma\geq\bar{\gamma}_0$. Then, from Lemma \ref{lemmasun}, there exists a positive constant $\bar{T}_0$ such that for any $0<T<\bar{T}_0$, system \eqref{avcompactformiipe} is exponentially stable at the origin.
\end{proof}
 \begin{lem}\label{avlemmaexp} Under Assumptions \ref{ass0}, \ref{assuniform} and \ref{ass1}, consider the following linear system
 \begin{align}\label{avcompactformii}
\dot{z} =&\big[A_{u}-\gamma V_{u}^T\left(\mathcal{L}_{\sigma(t/\epsilon)}\otimes I_n\right)V_{u}\big]z,
\end{align}
where $z=\textnormal{\col}(z_1,\cdots,z_N)$ with $z_i\in \mathds{R}^{n-\nu_i}$. Suppose that the communication network $\mathcal{G}$ is strongly connected. Then, there exists a positive scalar $\varepsilon^*$ such that, for any $0<\epsilon<\varepsilon^*$ and initial condition, the system \eqref{avcompactformii} is exponentially stable at the origin.
 \end{lem}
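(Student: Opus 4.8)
The plan is to view \eqref{avcompactformii} as a fast-switching perturbation of the averaged dynamics and close the argument by an averaging estimate driven by a quadratic Lyapunov function for the average system. Set $\bar A=A_{u}-\gamma V_{u}^{T}\left(\mathcal{L}\otimes I_n\right)V_{u}$, where $\mathcal{L}$ is the average Laplacian of Assumption \ref{assuniform}; since $\mathcal{G}$ is strongly connected and $\gamma$ is taken large enough, Lemma \ref{lemmaexp} gives that $\bar A$ is Hurwitz, so there is $P=P^{T}>0$ with $\bar A^{T}P+P\bar A=-I$. Write the right-hand side of \eqref{avcompactformii} as $\dot z=\bar A z+\Delta(t/\epsilon)z$ with $\Delta(\tau)=-\gamma V_{u}^{T}\big((\mathcal{L}_{\sigma(\tau)}-\mathcal{L})\otimes I_n\big)V_{u}$. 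Because $\sigma$ takes finitely many values, $\|\Delta(\tau)\|\le c_{0}$ uniformly in $\tau$, and Remark \ref{remark2} (uniform convergence to the average) gives, for every $\eta>0$, a fast-time window length $T_{\eta}$ such that $\big\|\int_{\tau}^{\tau+T}\Delta(s)\,ds\big\|\le\eta T$ for all $T\ge T_{\eta}$ and all $\tau\ge0$.

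The core step is a Lyapunov decrement over a sliding window of physical length $\epsilon T$. For $V(z)=z^{T}Pz$ and any $t_{0}\ge0$,
$$V\big(z(t_{0}+\epsilon T)\big)-V\big(z(t_{0})\big)=\int_{t_{0}}^{t_{0}+\epsilon T}\Big(-\|z(t)\|^{2}+2\,z(t)^{T}P\,\Delta(t/\epsilon)\,z(t)\Big)\,dt .$$
A Gronwall estimate based on $\|\bar A\|+c_{0}$ shows $\|z(t)-z(t_{0})\|\le C\epsilon T\sup_{[t_{0},t_{0}+\epsilon T]}\|z\|$ and $\sup_{[t_{0},t_{0}+\epsilon T]}\|z\|\le(1+2C\epsilon T)\|z(t_{0})\|$ when $C\epsilon T$ is small. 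Freezing $z$ at $z(t_{0})$ inside the perturbation integral and substituting $s=t/\epsilon$ turns its leading term into $2\epsilon\,z(t_{0})^{T}P\big(\int_{\tau_{0}}^{\tau_{0}+T}\Delta(s)\,ds\big)z(t_{0})$ with $\tau_{0}=t_{0}/\epsilon$, bounded by $2\|P\|\,\epsilon\eta T\,\|z(t_{0})\|^{2}$, while the residual from freezing is $O((\epsilon T)^{2})\|z(t_{0})\|^{2}$ and the $-\|z\|^{2}$ term is at most $-\epsilon T(1-4C\epsilon T)\|z(t_{0})\|^{2}$. Collecting,
$$V\big(z(t_{0}+\epsilon T)\big)-V\big(z(t_{0})\big)\le\Big(-1+2\|P\|\,\eta+O(\epsilon T)\Big)\,\epsilon T\,\|z(t_{0})\|^{2}.$$

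The constants are then chosen in the correct order: first pick $\eta$, hence $T=T_{\eta}$, so that $2\|P\|\eta\le\tfrac14$, and only afterwards pick $\varepsilon^{*}$ small enough that the $O(\epsilon T)$ remainder is $\le\tfrac14$ for every $0<\epsilon<\varepsilon^{*}$; this yields $V(z(t_{0}+\epsilon T))\le V(z(t_{0}))-\tfrac12\epsilon T\|z(t_{0})\|^{2}\le(1-c)\,V(z(t_{0}))$ with $c=\tfrac{\epsilon T}{2\lambda_{P}}\in(0,1)$. Iterating over the grid $t_{0}=k\epsilon T$ gives geometric decay of $V$ along that grid, and the uniform within-window growth bound upgrades this to $\|z(t)\|\le M e^{-\alpha t}\|z(0)\|$ for some $M,\alpha>0$ independent of the initial condition, i.e. global exponential stability of the origin of \eqref{avcompactformii}.

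I expect the main obstacle to be precisely this ordering of quantifiers together with the associated bookkeeping: the averaging bound is useful only for \emph{large} $T$, whereas the freezing/Gronwall errors are controllable only for \emph{small} $\epsilon T$, so one must commit to $T=T_{\eta}$ before shrinking $\epsilon$, and one must check that every error term genuinely carries a factor $\epsilon T$ (or $(\epsilon T)^{2}$) against the dominant $-\epsilon T\|z(t_{0})\|^{2}$ term. A secondary point needing care is that, unlike in Lemmas \ref{lemmaexp}--\ref{avlemmaexp0}, the symmetric part of $\mathcal{L}_{\sigma(t/\epsilon)}$ need not be positive semidefinite at any instant (the instantaneous graph may even be empty), so no instantaneous Lyapunov decrease is available and the window argument is unavoidable.
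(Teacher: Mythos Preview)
Your argument is correct and complete: the windowed Lyapunov decrement, the freezing estimate, and the order in which you fix $T=T_\eta$ before shrinking $\epsilon$ are all sound, and the final iteration along the grid $t_0=k\epsilon T$ yields the claimed exponential decay.

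The route, however, differs from the paper's. The paper follows the classical averaging framework of Khalil (Chapter~10): it introduces the near-identity change of variables $z=[I+\epsilon\,w(t/\epsilon,\epsilon)]\tilde z$ with $w(t,\epsilon)=\int_0^t Q(\tau)e^{-\epsilon(t-\tau)}d\tau$, shows $\epsilon\|w\|=O(\kappa(\epsilon))$ for a class-$\mathcal{K}$ function $\kappa$, and obtains in the new coordinates $\dot{\tilde z}=[Q_{av}+N(t/\epsilon,\epsilon)]\tilde z$ with $N=O(\kappa(\epsilon))$; an \emph{instantaneous} Lyapunov inequality $\dot U\le -(1-2N^*\kappa(\epsilon)\|P_{av}\|)\|\tilde z\|^2$ then closes the proof for $\epsilon$ small. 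Your approach stays in the original coordinates and trades the coordinate transformation for an integrated decrement over a window of physical length $\epsilon T$. What you gain is elementarity (no $w$, no invertibility of $I+\epsilon w$, no appeal to Theorem~10.5) and a transparent separation of the two small parameters $\eta$ and $\epsilon T$; what the paper's route gains is a direct pointer to a textbook theorem and a single smallness condition expressed through $\kappa(\epsilon)$. Both rely on exactly the same two ingredients---Hurwitzness of the averaged matrix from Lemma~\ref{lemmaexp} and the uniform averaging bound of Remark~\ref{remark2}---so neither is more general here.
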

\begin{proof} The proof proceeds following the same steps as in \cite{khalil2002nonlinear}and \cite{kim2013consensus}.%wang2010input
We define $Q_{av}= A_{u}-\gamma V_{u}^T\left(\mathcal{L}\otimes I_n\right)V_{u}$ and ${\myr Q(t/\epsilon)= \gamma V_{u}^T\big[(\mathcal{L}-\mathcal{L}_{\sigma(t/\epsilon)})\otimes I_n\big]V_{u}}$. Then, we have
 \begin{align}%\label{av1}
\dot{z} =&Q_{av}z+ Q(t/\epsilon)z.\nonumber
\end{align}
{\myr Under Assumption \ref{assuniform}, and from Remark \ref{remark2}}, $Q(t)$ satisfies
$$\Big\|\frac{1}{T}\int_{t}^{t+T}Q(\tau)d\tau\Big\|\leq k_1\delta(T),$$
where $k_1=\gamma \max\limits_{i\in \mathcal{V}}\|V_{ui}\|^2$. %It is noted that the matrix $Q_{av}$ is Hurwitz. Then, there exits positive definite matrix $P$ for any $\gamma\geq \gamma_M$
%$$\left[
%  \begin{array}{cc}
%   PQ_{av}+Q_{av}^TP & P \\
%    &  \\
%  \end{array}
% \right]
%$$
%
%
%Then, there exists $P$ such that $$P(A_{u}-\gamma Q_{av})$$
%
Then, we define the following matrix $$w(t,\epsilon)=\int_{0}^{t}Q(\tau)e^{-\epsilon(t-\tau)}d\tau,$$
for some positive constant $\epsilon$. At $\epsilon=0$, the function $w(t,0)$ satisfies
 \begin{align}\label{av2}
 \|w(t_1, 0)-w(t_2,0)\|%=&\Big\|\int_{t_1}^{t_2}Q(\tau)e^{-\epsilon(t-\tau)}d\tau\Big\|\nonumber\\
 \leq&k_1|t_2-t_1|\delta(t_2-t_1),
 \end{align}
 where $k_1=\gamma\max\limits_{i\in \mathcal{V}}\|V_{ui}\|^2$. Integrating $w(t, \epsilon)$ by parts, we have
 \begin{align}\label{av3}
 w(t,& \epsilon)= w(t, 0)-\epsilon\int_{0}^{t}e^{-\epsilon(t-\tau)}w(\tau, 0)d\tau\nonumber\\
 =&{\myr w(t, 0)-\epsilon\int_{0}^{t}e^{-\epsilon(t-\tau)}w(t, 0)d\tau}\nonumber\\
 &{\myr-\epsilon\int_{0}^{t}e^{-\epsilon(t-\tau)}\big[w(\tau, 0)-w(t, 0)\big]d\tau}\nonumber\\
  =&{\myr w(t, 0)-w(t, 0)\left[e^{-\epsilon(t-\tau)}\big|_{0}^{t}\right]}\nonumber\\
 &{\myr-\epsilon\int_{0}^{t}e^{-\epsilon(t-\tau)}\big[w(\tau, 0)-w(t, 0)\big]d\tau}\nonumber\\
 =&{\myr w(t, 0)-w(t, 0)\left[1-e^{-\epsilon t}\right]}\nonumber\\
 &{\myr-\epsilon\int_{0}^{t}e^{-\epsilon(t-\tau)}\big[w(\tau, 0)-w(t, 0)\big]d\tau}\nonumber\\
 = & e^{-\epsilon t}w(t, 0)-\epsilon\int_{0}^{t}e^{-\epsilon(t-\tau)}\big[w(\tau, 0)-w(t, 0)\big]d\tau.
  \end{align}
 Using the last expression, it follows that $w(t,\epsilon)$ fulfills the inequality:
 {\myr\begin{align}\label{av4}
 \| w(t, &\epsilon)\|\leq te^{-\epsilon t}k_1\delta(t)+k_1\epsilon\int_{0}^{t}e^{-(t-\tau)\epsilon}(t-\tau)\delta(t-\tau)d\tau.
  \end{align}}
It follows {\myr from \cite{khalil2002nonlinear}} that there exits a class $\mathcal{K}$ function $\kappa$ such that $ \epsilon\| w(t, \epsilon)\|\leq \kappa(\epsilon)$ for all $s\leq0$ and $0\leq \epsilon\leq 1$. Therefore, $\epsilon\| w(t, \epsilon)\|=O(\kappa(\epsilon))$\footnote{$f_{1}(\epsilon)=O(f(\epsilon))$ if there are positive constants $a$ and $b$ such that $\|f_{1}(\epsilon)\|\leq a\|f(\epsilon)\|$ for $\|\epsilon\|< b$.}.
Let us define a new variable {\myr $\tilde{z}$ satisfying} $$\myr z=[I+\epsilon w(t/\epsilon, \epsilon)]\tilde{z}.$$ 
As $\epsilon\| w(t/\epsilon, \epsilon)\|=O(\kappa(\epsilon))$ for all $t\geq 0$, there exists a constant $\epsilon_1>0$ such that $I + \epsilon w(t/\epsilon, \epsilon) $ is nonsingular for $0\leq\epsilon\leq \epsilon_1$. Denote
 $$\big[I + \epsilon w(t/\epsilon, \epsilon)\big]^{-1}\triangleq I+ F(t/\epsilon,\epsilon).$$
It can be seen that $F(t/\epsilon,\epsilon)=O( \kappa(\epsilon))$ for all $0\leq t$. The coordinate change yields the following dynamics
  \begin{align}\label{av5}
  \dot{z}=&\epsilon \dot{w}(t/\epsilon, \epsilon){\myr \tilde{z}}+[I+\epsilon w(t/\epsilon, \epsilon)]{\myr \dot{\tilde{z}}}\nonumber\\
 % =&\left[ Q(t/\epsilon)- \epsilon w(t/\epsilon, \epsilon)\right]y\nonumber\\
 % &+[I+\epsilon w(t/\epsilon, \epsilon)]\dot{y}\nonumber\\
  =&\epsilon\frac{d\left(\int_{0}^{t/\epsilon}Q(\tau)e^{-\epsilon(t/\epsilon-\tau)}d\tau\right)}{dt} \tilde{z} +[I+\epsilon w(t/\epsilon, \epsilon)]\dot{\tilde{z}}\nonumber\\
    =&\epsilon\left(Q(t/\epsilon)/\epsilon-\int_{0}^{t/\epsilon}Q(\tau)e^{-\epsilon(t/\epsilon-\tau)}d\tau\right) \tilde{z} +[I+\epsilon w(t/\epsilon, \epsilon)]\dot{\tilde{z}}\nonumber\\
  =& \big[ Q(t/\epsilon)-\epsilon w(t/\epsilon, \epsilon)\big] {\myr \tilde{z}}+[I+\epsilon w(t/\epsilon, \epsilon)]{\myr \dot{\tilde{z}}}.
  \end{align}
 Then, we have
{ \myr\begin{align}\label{av5} 
 \dot{\tilde{z}}=& [I+\epsilon w(t/\epsilon, \epsilon)]^{-1}\dot{z}\nonumber\\
 &-[I+\epsilon w(t/\epsilon, \epsilon)]^{-1}\big[ Q(t/\epsilon) -\epsilon w(t/\epsilon, \epsilon) \big]\tilde{z} \nonumber\\
 =& [I+\epsilon w(t/\epsilon, \epsilon)]^{-1}\left[Q_{av}+ Q(t/\epsilon)\right][I+\epsilon w(t/\epsilon, \epsilon)]\tilde{z}\nonumber\\
 &-[I+\epsilon w(t/\epsilon, \epsilon)]^{-1}\big[ Q(t/\epsilon) -\epsilon w(t/\epsilon, \epsilon) \big]\tilde{z} \nonumber\\
 =&[I+ F(t/\epsilon,\epsilon)]\left[Q_{av}+ Q(t/\epsilon)\right][I+\epsilon w(t/\epsilon, \epsilon)]\tilde{z}\nonumber\\
 &-[I+ F(t/\epsilon,\epsilon)]\big[ Q(t/\epsilon) -\epsilon w(t/\epsilon, \epsilon) \big]\tilde{z} \nonumber\\
 =&[I+ F(t/\epsilon,\epsilon)]Q_{av}[I+\epsilon w(t/\epsilon, \epsilon)]\tilde{z}\nonumber\\
  &+[I+ F(t/\epsilon,\epsilon)] [Q(t/\epsilon)\tilde{z}+\epsilon Q(t/\epsilon) w(t/\epsilon, \epsilon)\tilde{z}]\nonumber\\
 &-[I+ F(t/\epsilon,\epsilon)]\big[ Q(t/\epsilon)\tilde{z} -\epsilon w(t/\epsilon, \epsilon)\tilde{z} \big] \nonumber\\ 
  =&Q_{av}y+ F(t/\epsilon,\epsilon)Q_{av}\tilde{z}\nonumber\\
  &+\epsilon [I+ F(t/\epsilon,\epsilon)]Q_{av}w(t/\epsilon, \epsilon)\tilde{z}\nonumber\\
  &+\epsilon[I+ F(t/\epsilon,\epsilon)] Q(t/\epsilon) w(t/\epsilon, \epsilon)\tilde{z}\nonumber\\
 &+\epsilon[I+ F(t/\epsilon,\epsilon)] w(t/\epsilon, \epsilon)\tilde{z} \nonumber\\ 
 =& \big[Q_{av} +N(t/\epsilon,\epsilon)\big]\tilde{z},%\nonumber\\
 %&+ [I+ F(t/\epsilon,\epsilon)]Q(t/\epsilon)[I+\epsilon w(t/\epsilon, \epsilon)]y\nonumber\\
 % &+[ F(t/\epsilon,\epsilon)Q_{av}+\epsilon(I+F(t/\epsilon,\epsilon)) w(t/\epsilon, \epsilon)] [I+\epsilon w(t/\epsilon, \epsilon)]y\nonumber\\
 \end{align}}
 where
 { \myr \begin{align}%\label{av5}
 N(t/\epsilon,\epsilon)=& F(t/\epsilon,\epsilon)Q_{av}+\epsilon [I+ F(t/\epsilon,\epsilon)]Q_{av}w(t/\epsilon, \epsilon)\nonumber\\
 &+\epsilon[I+ F(t/\epsilon,\epsilon)] [I+Q(t/\epsilon)] w(t/\epsilon, \epsilon) .\nonumber
 \end{align}}
 By Lemma \ref{lemmaexp}, it can be concluded that $Q_{av}$ is Hurwitz for a sufficiently large enough value of $\gamma$. As a result, there exists a positive definite symmetric matrix $P_{av}$ such that $$P_{av}Q_{av}+Q_{av}^TP_{av}\leq-I.$$
 We pose the following Lyapunov function candidate for system \eqref{av5}:
\begin{align}\label{lyapu} \myr 
U(\tilde{z})=\tilde{z}^TP_{av}\tilde{z}.
\end{align}
The time derivative of {\myr $U(\tilde{z})$} along the trajectories of \eqref{compactformii} can be evaluated as
{\myr\begin{align}\label{lyapu1}
\dot{U}(\tilde{z})=&2\tilde{z}^TP_{av}\big[Q_{av} +N(t/\epsilon,\epsilon)\big]\tilde{z}\nonumber\\
\leq&-\tilde{z}^T\tilde{z}+2\tilde{z}^TP_{av}N(t/\epsilon,\epsilon)\tilde{z}\nonumber\\
\leq&-\|\tilde{z}\|^2+2\|P_{av}\|\|N(t/\epsilon,\epsilon)\|\|\tilde{z}\|^2.
\end{align}}
It is noted that $N(t/\epsilon,\epsilon)=O(\kappa(\epsilon))$. Then, we have $\|N(t/\epsilon,\epsilon)\|\leq N^*\ \kappa(\epsilon)$ for some positive constant $N^*$. As a result, it follows from \eqref{lyapu1} that we have
\begin{align}
{\myr\dot{U}(\tilde{z})\leq-\Big(1-2N^*\ \kappa(\epsilon)\|P_{av}\|\Big)\|\tilde{z}\|^2.}\nonumber
\end{align}
Hence, for any $\epsilon\in (0, \epsilon^*)$ with $\epsilon^*=\min\Big\{\kappa^{-1}\big(\frac{1}{2\|P_{av}\|N^*}\big),\epsilon_1\Big\}$, we have $\dot{U}(\tilde{z})<0$. Therefore, the system \eqref{av5} is exponentially stable at the origin for sufficiently small $\epsilon\in (0, \epsilon^*)$, {\myr together with $\left(Q_{av}z+ Q(t/\epsilon)z\right)\big|_{z=0}\equiv0$ for all $(t,\epsilon)\in [0,\infty)\times [0, \epsilon^*]$} further implies that \eqref{avcompactformii} is exponentially stable at the origin for sufficiently small $\epsilon\in (0, \epsilon^*)$ from {\myr Theorem 10.5} in \cite{khalil2002nonlinear}.
\end{proof}

%{Based on these lemmas, we are now ready to introduce our main results.}
%\begin{lem}\label{kimPD}\cite{kim2019completely} Suppose that the communication network $\mathcal{G}=(\mathcal{V},\mathcal{E})$ is strongly connected. Then, the following statements are equivalent:
%\begin{enumerate}
% \item The system \eqref{leader} is jointly observable;
% \item The matrix $V_{u}^T\big(\hat{\mathcal{L}}\otimes I_n\big)V_{u}$ is positive definite;
% \item The matrix $V_{u}^T\left({\mathcal{L}}\otimes I_n\right)V_{u}$ is nonsingular.
%\end{enumerate}
%\end{lem}
\section{Main Results}\label{mainresults}
\subsection{Distributed Observer over Time-Varying Graphs}\label{mainresults1}
We now introduce the following linear dynamic observer:
%\begin{subequations}
\begin{align}
\dot{\hat{x}}_i=&A\hat{x}_i+ L_i(C_i\hat{x}_i-y_i)+\gamma M_i\sum\limits_{j\in \mathcal{N}_i(t)}(\hat{x}_j-\hat{x}_i),\label{compensator}%\\
%\dot{\gamma}_i=&e_{vi}^TM_i^TM_ie_{vi},\\
%e_{vi}=&\sum\nolimits_{j\in \mathcal{N}_i}(\hat{x}_j-\hat{x}_i),
\end{align}%\end{subequations}
 where, for $i=1\cdots,N$, $\hat{x}_i\in \mathds{R}^{n}$ is the estimate of $x$,
 \begin{align}
 L_i=V_i\left[
     \begin{array}{c}
      0 \\
       L_{oi}\\
     \end{array}
    \right],~~M_i=V_i\left[
              \begin{array}{cc}
               I_{n-v_i}& 0 \\
               0 &0\\
              \end{array}
             \right]V_i^T,\nonumber
\end{align}
 $\gamma$ is a sufficiently large positive constant to be determined and $L_{oi}\in \mathds{R}^{v_i\times p_i}$ is chosen such that $(A_{oi}+L_{oi}C_{oi})$ is Hurwitz.

For $i=1,\cdots,N$, let $\tilde{x}_i=\hat{x}_i-x$ be the estimation error of the $i$th observer. Then, we have
\begin{align}\label{decom1}
\dot{\tilde{x}}_i=&A\tilde{x}_i+ L_iC_i\tilde{x}_i+\gamma M_i\sum\nolimits_{j\in\mathcal{N}_i(t)}{(\tilde{x}_j-\tilde{x}_i)}\nonumber\\
=&(A+ L_iC_i)\tilde{x}_i-\gamma M_i\sum\nolimits_{j=1}^{N} l_{ij}(t){\tilde{x}_j},
\end{align}
where $l_{ij}(t)$ is the $(i,j)$-th entry of the Laplacian matrix $\mathcal{L}_{\sigma(t)}$ at time moment $t$.
Let $\tilde{x}_{oi}=V_{oi}^T\tilde{x}_i$ and $\tilde{x}_{ui}=V_{ui}^T\tilde{x}_i$, for $i=1,\cdots,N$. Then, we have the following system from \eqref{decom} and \eqref{decom1},
\begin{subequations}\label{decom2}\begin{align}
\dot{\tilde{x}}_{ui}=&A_{ui}\tilde{x}_{ui}+A_{ri}\tilde{x}_{oi}\nonumber\\
&-\gamma V_{ui}^T\sum\nolimits_{j=1}^{N} l_{ij}(t)\big[{V_{uj}\tilde{x}_{uj}+V_{oj}\tilde{x}_{oj}}\big],\label{decom2b}\\
\dot{\tilde{x}}_{oi}=&(A_{oi}+L_{oi}C_{oi})\tilde{x}_{oi}.\label{decom2a}
\end{align}
\end{subequations}
Let $\tilde{x}_{u}=\col(\tilde{x}_{u1},\cdots,\tilde{x}_{uN})$, $\tilde{x}_{o}=\col(\tilde{x}_{o1},\cdots,\tilde{x}_{oN})$, $V_{o}=\textnormal{diag}(V_{o1},\cdots,V_{oN})$, and $A_{r}=\textnormal{diag}({A_{r1}},\cdots,{A_{rN}})$. Then, the system \eqref{decom2b} can be put into the following compact form,
%\begin{subequations}
\begin{align}\label{compactform}
%{\tilde{x}}_{o}(t_{k+1})=&\big[A_{o}+L_{d} C_{o}\big]\tilde{x}_{o}(t_k)\\
\dot{\tilde{x}}_{u}=&A_{u}\tilde{x}_{u}+A_r\tilde{x}_{o}-\gamma V_{u}^T\left(\mathcal{L}_{\sigma(t)}\otimes I_n\right)\big[V_{u}\tilde{x}_{u}+V_{o}\tilde{x}_{o}\big]\nonumber\\
=&M(t)\tilde{x}_{u}+N(t)\tilde{x}_{o}
\end{align}
where $M(t)=A_{u}-\gamma V_{u}^T\left(\mathcal{L}_{\sigma(t)}\otimes I_n\right)V_{u}$ and $N(t)=A_r-\gamma V_{u}^T\left(\mathcal{L}_{\sigma(t)}\otimes I_n\right)V_{o}$.

Then, we have the following results.
%%
%\begin{lem} \label{corr} Consider linear switched system \eqref{compactform} and let Assumptions \ref{ass0}, \ref{ass0i} and \ref{ass1} be met. there exist some constants $\bar{\gamma}_0$ and $T_0$, such that, for all $\gamma\geq \bar{\gamma}_0$ and $0<T<\bar{T}_0$, the system \eqref{compactform} is exponentially stable at the origin, provided that $\tilde{x}_{o}(t)=0$ over $\mathds{N}$.
%\end{lem}
%\begin{proof}As $\tilde{x}_{o}(t)=0$ for all $t\geq 0$, the system \eqref{compactform} reduces to the following system
%\begin{align}\label{compactform2}
%\dot{\tilde{x}}_{u}=&M(t)\tilde{x}_{u}.
%\end{align}
%Under Assumptions \ref{ass0} and \ref{ass0i}, for any $k=0,1,2,\cdots$, the union graph $\bigcup_{j=i_k}^{i_{k+1}-1}{\mathcal{G}}_{\sigma\left(t_j\right)}=\bigcup_{\rho=1}^{p}{\mathcal{G}}_{\rho}={\mathcal{G}}$
%is strongly connected directed graph. Clearly, $\mathcal{L}=\sum_{\rho=1}^{p}\omega_\rho{\mathcal{L}}_{\rho}$. Then, form $\sum_{\rho=1}^{p}\omega_\rho=1$, we have
%$$\sum_{\rho=1}^{p}\omega_\rho\big[A_{u}-\gamma V_{u}^T\left(\mathcal{L}_{\rho}\otimes I_n\right)V_{u}\big]=A_{u}-\gamma V_{u}^T\left(\mathcal{L}\otimes I_n\right)V_{u}.$$
%Thus, from Lemma \ref{lemmaexp}, there exists $\bar{\gamma}_0$ such that $A_{u}-\gamma V_{u}^T\left(\mathcal{L}\otimes I_n\right)V_{u}$ is is Hurwitz for any $\gamma\geq\bar{\gamma}_0$. Then, from Lemma \ref{lemmasun}, there exists a positive constant $\bar{T}_0$ such that for any $0<T<\bar{T}_0$, system \eqref{compactform2} is exponentially stable at the origin.
%\end{proof}
\begin{lem}\label{discrettimeexo} Under Assumption \ref{ass1}, consider systems \eqref{leader} and linear switched system \eqref{compensator}. Then,
$$ \lim_{t\rightarrow\infty}(\hat{x}_i(t)-x(t))=0,\;i=1,\cdots,N,$$ for any $x(0)\in \mathds{R}^{n}$ and $\hat{x}_i(0)\in \mathds{R}^{n}$, provided that the following system \begin{align}\label{compactform2}
\dot{\tilde{x}}_{u}=&M(t)\tilde{x}_{u},
\end{align}
is exponentially stable.
\end{lem}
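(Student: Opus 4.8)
The plan is to exploit the cascade structure already exposed in \eqref{decom2}. First I would observe that the observable components decouple completely: by \eqref{decom2a}, each $\tilde{x}_{oi}$ obeys the autonomous linear equation $\dot{\tilde{x}}_{oi}=(A_{oi}+L_{oi}C_{oi})\tilde{x}_{oi}$, whose system matrix is Hurwitz by the construction of $L_{oi}$. Hence there are constants $\kappa_o\geq 1$ and $\alpha_o>0$ with $\|\tilde{x}_{oi}(t)\|\leq\kappa_o e^{-\alpha_o t}\|\tilde{x}_{oi}(0)\|$ for every $i$, and stacking gives $\|\tilde{x}_{o}(t)\|\leq\kappa_o e^{-\alpha_o t}\|\tilde{x}_{o}(0)\|$; in particular $\tilde{x}_{o}(t)\to 0$ exponentially. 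Here Assumption \ref{ass1} is invoked only to fix the Kalman decomposition \eqref{decom} that makes this splitting meaningful; the network connectivity is not needed, since exponential stability of \eqref{compactform2} is assumed outright (it is established separately via Lemmas \ref{lemmaexp}--\ref{avlemmaexp}).

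Next I would treat \eqref{compactform} as the linear time-varying system $\dot{\tilde{x}}_{u}=M(t)\tilde{x}_{u}+N(t)\tilde{x}_{o}(t)$ driven by the vanishing signal $\tilde{x}_{o}(t)$. Because $\sigma(t)$ takes values in the finite set $\{1,\dots,p\}$ with dwell time $\tau>0$, while $A_r$, $V_u$, $V_o$ are constant, the matrix $N(t)=A_r-\gamma V_{u}^T(\mathcal{L}_{\sigma(t)}\otimes I_n)V_{o}$ is uniformly bounded, say $\|N(t)\|\leq\bar{N}$ for all $t\geq 0$. By hypothesis the homogeneous system \eqref{compactform2} is exponentially stable, so its state transition matrix $\Phi(t,s)$ satisfies $\|\Phi(t,s)\|\leq\kappa_u e^{-\alpha_u(t-s)}$ for some $\kappa_u\geq 1$, $\alpha_u>0$ and all $t\geq s\geq 0$. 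Variation of constants then yields
\begin{align*}
\tilde{x}_{u}(t)=\Phi(t,0)\tilde{x}_{u}(0)+\int_{0}^{t}\Phi(t,s)N(s)\tilde{x}_{o}(s)\,ds,
\end{align*}
whence
\begin{align*}
\|\tilde{x}_{u}(t)\|\leq\kappa_u e^{-\alpha_u t}\|\tilde{x}_{u}(0)\|+\kappa_u\bar{N}\kappa_o\|\tilde{x}_{o}(0)\|\int_{0}^{t}e^{-\alpha_u(t-s)}e^{-\alpha_o s}\,ds.
\end{align*}
The remaining integral is bounded by $\tfrac{e^{-\min\{\alpha_u,\alpha_o\}t}}{|\alpha_u-\alpha_o|}$ when $\alpha_u\neq\alpha_o$, and by $t\,e^{-\alpha_u t}$ when $\alpha_u=\alpha_o$; in either case it tends to $0$ as $t\to\infty$, so $\tilde{x}_{u}(t)\to 0$ (in fact the cascade of two exponentially stable blocks is itself exponentially stable).

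Finally I would reassemble the full error. Since $V_i=\row(V_{ui},V_{oi})$ is orthogonal, $V_iV_i^T=V_{ui}V_{ui}^T+V_{oi}V_{oi}^T=I_n$, so $\tilde{x}_i=V_{ui}\tilde{x}_{ui}+V_{oi}\tilde{x}_{oi}$, and therefore $\|\hat{x}_i(t)-x(t)\|=\|\tilde{x}_i(t)\|\leq\|\tilde{x}_{ui}(t)\|+\|\tilde{x}_{oi}(t)\|\to 0$ for each $i=1,\dots,N$, which is the claim. I expect the only genuine work to be the variation-of-constants estimate above — confirming that an exponentially decaying input fed into an exponentially stable linear time-varying system produces a vanishing state — together with the bookkeeping needed to bound $N(t)$ uniformly over the (finitely many, dwell-time-separated) switching modes; everything else follows directly from the Kalman decomposition \eqref{decom}, the Hurwitz choice of $L_{oi}$, and the standing hypothesis that \eqref{compactform2} is exponentially stable.
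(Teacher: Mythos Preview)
Your proposal is correct and reaches the same conclusion as the paper, but the central step is handled differently. The paper invokes the converse Lyapunov theorem for the exponentially stable linear time-varying system \eqref{compactform2}: it produces a bounded positive-definite $P(t)$ solving $\dot{P}=-PM-M^TP-Q$, sets $V(\tilde{x}_u)=\tilde{x}_u^TP(t)\tilde{x}_u$, and obtains $\dot{V}\leq -cV+\text{(exponentially decaying term)}$, finishing with an input-to-state stability argument. You instead use the state-transition matrix bound $\|\Phi(t,s)\|\leq\kappa_u e^{-\alpha_u(t-s)}$ and variation of constants directly. Your route is more elementary here---it avoids constructing $P(t)$ and the ISS machinery, and gives the exponential rate of the cascade explicitly---while the paper's Lyapunov approach would transfer more readily to perturbed or nonlinear variants. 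Both rely on the same two facts you isolated: $\|N(t)\|$ is uniformly bounded because $\sigma$ ranges over finitely many modes, and $\tilde{x}_o$ decays exponentially by the Hurwitz choice of $L_{oi}$. Your reassembly via $\tilde{x}_i=V_{ui}\tilde{x}_{ui}+V_{oi}\tilde{x}_{oi}$ is equivalent to the paper's (which inverts $V_i^T$).
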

\begin{proof} As $\tilde{x}_{o}(t)=0$ for all $t\geq 0$, the system \eqref{compactform} reduces to the system \eqref{compactform2}.
Since the system \eqref{compactform2} is exponentially stable, it follows that, for any positive definite matrix $Q(t)$ satisfying $\|Q(t)\|\geq c_3$, for some positive constant $c_3$, there exists a positive definite matrix $P(t)$ satisfying $c_1\leq \|P(t)\|\leq c_2$ for some positive constants $c_1$ and $c_2$ which solves the matrix differential equation:
\begin{align}\label{lypP}
\dot{P}(t)=-P(t)M(t)-M^T(t)P(t)-Q(t).
\end{align}
We pose the following Lyapunov function candidate $V(\tilde{x}_{u})=\tilde{x}_{u}^TP(t)\tilde{x}_{u}$. Its rate of change along the trajectories of system \eqref{compactform} is such that
\begin{align}\label{VtimeP}
\dot{V}=&\tilde{x}_{u}^T\big[\dot{P}(t)+P(t)M(t)+M^T(t)P(t)\tilde{x}_{u}+2P(t)N(t)\tilde{x}_{o}\big]\nonumber\\
=&-\tilde{x}_{u}^TQ(t)\tilde{x}_{u}+2\tilde{x}_{u}^TP(t)N(t)\tilde{x}_{o}\nonumber\\
\leq&-c_3\|\tilde{x}_{u}\|^2+2 c_2\|N(t)\|\|\tilde{x}_{u}\|\|\tilde{x}_{o}\|\nonumber\\
\leq&-c_3\|\tilde{x}_{u}\|^2+\bigg[\frac{c_3}{4}\|\tilde{x}_{u}\|^2+\frac{4c_2^2}{c_3}\|N(t)\|^2\|\tilde{x}_{o}\|^2\bigg]\nonumber\\
\leq&-\frac{3c_3}{4}\|\tilde{x}_{u}\|^2+\frac{4c_2^2}{c_3}\|N(t)\|^2\|\tilde{x}_{o}\|^2\nonumber\\
\leq&-\frac{3c_3}{4c_2}V+\frac{4c_2^2}{c_3}\|N(t)\|^2\|\tilde{x}_{o}\|^2.
\end{align}
From Assumption \ref{ass1}, $L_{oi}$ was chosen such that $A_{oi}+L_{oi} C_{oi}$ is Hurwitz, for $i=1,\cdots,N$, and, therefore, we conclude that $\lim_{t\rightarrow \infty}\tilde{x}_{oi}(t)=0$. As a result, we can also conclude that $\lim_{k\rightarrow \infty}\|\tilde{x}_{o}\|^2=0$. We note that the matrix ${\mathcal{L}}_{\sigma(t)}$ is a piecewise constant matrix where $\sigma(t)$ has the range $\mathcal{P}=\{1,\cdots,p\}$. Thus, $N(t)$ is bounded over $[0, +\infty)$ and continuous on each time interval $[t_{s}, t_{s+1})$, $s=0,1,\cdots$. It follows that, for all $t\geq0$, $\lim_{t\rightarrow\infty}\frac{4c_2^2}{c_3}\|N(t)\|^2\|\tilde{x}_{o}(t)\|^2=0$ exponentially. Hence, system \eqref{lypP} can be viewed as an input to state stable system with $\frac{4c_2^2}{c_3}\|N(t)\|^2\|\tilde{x}_{o}(t)\|^2$ as the input. Since this input tends to zeros, it follows that we can conclude that $\lim_{t\rightarrow\infty}V(t)=0$ exponentially, which further implies $\lim_{t\rightarrow\infty}\tilde{x}_{o}(t)=0$. Hence, $\lim\limits_{t\rightarrow\infty}\tilde{x}_{ui}(t)=0$ and $\lim\limits_{t\rightarrow\infty}\tilde{x}_{oi}(t)=0$. And, since, $\tilde{x}_{oi}(t)=V_{oi}^T\tilde{x}_i(t)$ and $\tilde{x}_{ui}(t)=V_{ui}^T\tilde{x}_i(t)$, we conclude that $\lim\limits_{t\rightarrow\infty}\tilde{x}_i(t)=0$, for $i=1,\cdots,N$.
\end{proof}

\begin{thm}\label{thm1xe} Consider the systems \eqref{leader} and the linear switched system \eqref{compensator}. Under Assumptions \ref{ass0}, \ref{ass0i} and \ref{ass1}, there exists positive $\bar{\gamma}_0>0$ and $\bar{T}_0$ such that for any $\gamma\geq \bar{\gamma}_0$ and $0<T<\bar{T}_0$,
$ \lim_{t\rightarrow\infty}(\hat{x}_i(t)-x(t))=0$, exponentially for any $x(0)\in \mathds{R}^{n}$ and $\hat{x}_i(0)\in \mathds{R}^{n}$, $i=1,\cdots,N$.
\end{thm}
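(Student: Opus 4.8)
The plan is to reduce the assertion to the exponential stability of the homogeneous unobservable-error dynamics \eqref{compactform2}, which is then supplied by Lemma~\ref{avlemmaexp0}, and afterwards to propagate exponential decay through the cascade \eqref{compactform}. First I would recall the error decomposition already carried out above: setting $\tilde{x}_i=\hat{x}_i-x$ and projecting onto the observable and unobservable subspaces of $(C_i,A)$ through the orthogonal matrix $V_i$ yields the triangular system \eqref{decom2}. The observable part obeys \eqref{decom2a}, $\dot{\tilde{x}}_{oi}=(A_{oi}+L_{oi}C_{oi})\tilde{x}_{oi}$, and since $L_{oi}$ is picked so that $A_{oi}+L_{oi}C_{oi}$ is Hurwitz, each $\tilde{x}_{oi}(t)$ decays exponentially; stacking the components, $\tilde{x}_{o}(t)=\col(\tilde{x}_{o1},\dots,\tilde{x}_{oN})$ satisfies $\|\tilde{x}_{o}(t)\|\le k_0 e^{-\alpha_0 t}\|\tilde{x}_{o}(0)\|$ for suitable $k_0,\alpha_0>0$. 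The unobservable part is driven by $\tilde{x}_{o}$ via \eqref{compactform}, $\dot{\tilde{x}}_{u}=M(t)\tilde{x}_{u}+N(t)\tilde{x}_{o}$, where $M(t)=A_{u}-\gamma V_{u}^T(\mathcal{L}_{\sigma(t)}\otimes I_n)V_{u}$ is precisely the coefficient matrix of \eqref{avcompactformiipe} and $N(t)=A_r-\gamma V_{u}^T(\mathcal{L}_{\sigma(t)}\otimes I_n)V_{o}$ is bounded because $\mathcal{L}_{\sigma(t)}$ is piecewise constant and takes only finitely many values.

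The central step is to render \eqref{compactform2} exponentially stable. Under Assumptions~\ref{ass0} and \ref{ass0i}, the union graph $\bigcup_{\rho=1}^{p}\mathcal{G}_{\rho}=\mathcal{G}$ is strongly connected, so, invoking Assumption~\ref{ass1} and Lemma~\ref{avlemmaexp0}, there exist constants $\bar{\gamma}_0>0$ and $\bar{T}_0>0$ such that for every $\gamma\ge\bar{\gamma}_0$ and $0<T<\bar{T}_0$ the system $\dot{\tilde{x}}_{u}=M(t)\tilde{x}_{u}$ is exponentially stable at the origin. I would fix $\gamma$ and $T$ within these ranges for the rest of the argument; these are the $\bar{\gamma}_0$ and $\bar{T}_0$ claimed in the theorem.

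It remains to transport exponential stability through the cascade \eqref{compactform}, which is essentially the content of Lemma~\ref{discrettimeexo}; although that lemma is stated with mere asymptotic convergence, its proof actually delivers an exponential rate, and I would reuse it verbatim. Exponential stability of \eqref{compactform2} furnishes, by the converse Lyapunov theorem for time-varying linear systems, a symmetric matrix $P(t)$ with $c_1\le\|P(t)\|\le c_2$ solving \eqref{lypP}, and the candidate $V(\tilde{x}_{u})=\tilde{x}_{u}^TP(t)\tilde{x}_{u}$ satisfies $\dot{V}\le-\tfrac{3c_3}{4c_2}V+\tfrac{4c_2^2}{c_3}\|N(t)\|^2\|\tilde{x}_{o}(t)\|^2$. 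Since $N(t)$ is bounded and $\|\tilde{x}_{o}(t)\|$ decays exponentially, the forcing term decays exponentially, so the comparison lemma gives $V(t)\to 0$ exponentially and hence $\tilde{x}_{u}(t)\to 0$ exponentially. Finally, because $V_i$ is orthogonal, $\tilde{x}_i=V_{ui}\tilde{x}_{ui}+V_{oi}\tilde{x}_{oi}$, so $\|\hat{x}_i(t)-x(t)\|\le\|\tilde{x}_{ui}(t)\|+\|\tilde{x}_{oi}(t)\|\to 0$ exponentially for each $i=1,\dots,N$.

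The hard part will be the bookkeeping on the word \emph{exponentially} in the cascade: one must verify that the converse Lyapunov construction for the switched, piecewise-constant system $\dot{\tilde{x}}_{u}=M(t)\tilde{x}_{u}$ indeed yields a \emph{uniformly} bounded and uniformly positive definite $P(t)$ — this is legitimate because $M(t)$ takes finitely many values and the switching is periodic — and that the decay rates of $\tilde{x}_{o}$ and of \eqref{compactform2} combine correctly, which is precisely what the input-to-state-stability estimate above accomplishes. Everything else is a straightforward assembly of Lemmas~\ref{avlemmaexp0} and \ref{discrettimeexo}.
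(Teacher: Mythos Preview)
Your proposal is correct and follows essentially the same route as the paper: invoke Lemma~\ref{avlemmaexp0} to obtain $\bar{\gamma}_0$ and $\bar{T}_0$ making the homogeneous system \eqref{compactform2} exponentially stable, and then apply Lemma~\ref{discrettimeexo} to carry this through the cascade \eqref{compactform}. The paper's proof is terser---it simply cites the two lemmas without rederiving the cascade argument---but your additional care about the uniform bounds on $P(t)$ and the exponential (rather than merely asymptotic) conclusion is well placed and does not diverge from the intended strategy.
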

\begin{proof} In order to analyze the system \eqref{compactform}, we first assume that $\tilde{x}_{o}(t)=0$ for all $t\geq 0$, the system \eqref{compactform} will reduce to the system
\begin{align}\label{tldeAu}
\dot{\tilde{x}}_{u}=\left(A_{u}-\gamma V_{u}^T\left(\mathcal{L}_{\sigma(t)}\otimes I_n\right)V_{u}\right)\tilde{x}_{u}
\end{align}
System \eqref{tldeAu} is in the form of \eqref{avcompactformiipe}. Under Assumptions \ref{ass0}, \ref{ass0i} and \ref{ass1},
from Lemma \ref{avlemmaexp0}, there exists positive $\bar{\gamma}_0>0$ and $\bar{T}_0$ such that for any $\gamma\geq \bar{\gamma}_0$ and $0<T<\bar{T}_0$, \eqref{tldeAu} is exponentially stable at origin.
Therefore,
by using Lemma \ref{discrettimeexo}, we have $ \lim_{t\rightarrow\infty}(\hat{x}_i(t)-x(t))=0$, exponentially for any $x(0)\in \mathds{R}^{n}$ and $\hat{x}_i(0)\in \mathds{R}^{n}$, $i=1,\cdots,N$.
\end{proof}

\begin{thm} \label{theo2} Consider the systems \eqref{leader} and the linear switched system \eqref{compensator}. Under Assumptions \ref{assuniform} and \ref{ass1}, there exists a positive scalar $\varepsilon^*$, for any $0<\epsilon<\varepsilon^*$ and initial condition, such that $ \lim_{t\rightarrow\infty}(\hat{x}_i(t)-x(t))=0$, exponentially for any $x(0)\in \mathds{R}^{n}$ and $\hat{x}_i(0)\in \mathds{R}^{n}$, $i=1,\cdots,N$.
\end{thm}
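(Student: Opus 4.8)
The plan is to reduce Theorem \ref{theo2} to Lemma \ref{avlemmaexp} together with the cascade (input-to-state) argument already carried out in the proof of Lemma \ref{discrettimeexo}, replacing the periodic Laplacian $\mathcal{L}_{\sigma(t)}$ everywhere by the fast-switching Laplacian $\mathcal{L}_{\sigma(t/\epsilon)}$.

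First I would rewrite the stacked estimation-error dynamics \eqref{decom1} in the coordinates $\tilde{x}_{ui}=V_{ui}^T\tilde{x}_i$ and $\tilde{x}_{oi}=V_{oi}^T\tilde{x}_i$ to obtain the block-triangular system \eqref{decom2}, now with $l_{ij}(t/\epsilon)$ in place of $l_{ij}(t)$. The observable channel obeys $\dot{\tilde{x}}_{oi}=(A_{oi}+L_{oi}C_{oi})\tilde{x}_{oi}$, which is exponentially stable because $L_{oi}$ is chosen so that $A_{oi}+L_{oi}C_{oi}$ is Hurwitz; this convergence rate does not depend on $\epsilon$, so $\tilde{x}_o(t)\to0$ exponentially. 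Setting $\tilde{x}_o\equiv0$, the unobservable channel collapses to $\dot{\tilde{x}}_u=\big[A_u-\gamma V_u^T(\mathcal{L}_{\sigma(t/\epsilon)}\otimes I_n)V_u\big]\tilde{x}_u$, which is exactly system \eqref{avcompactformii}; the graph $\mathcal{G}$ induced by the average Laplacian is (strongly) connected by Assumption \ref{assuniform}, so the hypotheses of Lemma \ref{avlemmaexp} are met and, for $\gamma$ sufficiently large, there is $\varepsilon^*>0$ such that for every $0<\epsilon<\varepsilon^*$ this system is exponentially stable at the origin.

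Next I would close the cascade exactly as in Lemma \ref{discrettimeexo}: since $\dot{\tilde{x}}_u=M(t)\tilde{x}_u$ with $M(t)=A_u-\gamma V_u^T(\mathcal{L}_{\sigma(t/\epsilon)}\otimes I_n)V_u$ is exponentially stable and $M(\cdot)$ is bounded, there is a positive-definite $P(t)$ with $c_1\le\|P(t)\|\le c_2$ solving the Lyapunov differential equation \eqref{lypP}; taking $V(\tilde{x}_u)=\tilde{x}_u^TP(t)\tilde{x}_u$ and applying Young's inequality to the cross term $2\tilde{x}_u^TP(t)N(t)\tilde{x}_o$ (with $N(t)=A_r-\gamma V_u^T(\mathcal{L}_{\sigma(t/\epsilon)}\otimes I_n)V_o$ bounded) yields $\dot{V}\le-\tfrac{3c_3}{4c_2}V+\tfrac{4c_2^2}{c_3}\|N(t)\|^2\|\tilde{x}_o(t)\|^2$, whose forcing term decays exponentially; an input-to-state argument then gives $V(t)\to0$, hence $\tilde{x}_u(t)\to0$, exponentially. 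Combining with $\tilde{x}_o(t)\to0$ and $\tilde{x}_i=V_{ui}\tilde{x}_{ui}+V_{oi}\tilde{x}_{oi}$ gives $\hat{x}_i(t)-x(t)\to0$ exponentially for every $i=1,\dots,N$.

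The main obstacle is Step two, i.e. controlling the effect of the averaging: Theorem \ref{theo2} is not an immediate corollary of Lemma \ref{discrettimeexo} because one must show that the driving subsystem $\dot{\tilde{x}}_u=M(t)\tilde{x}_u$ is \emph{uniformly} exponentially stable on the stretched time scale $t/\epsilon$, with estimates independent of $\epsilon\in(0,\varepsilon^*)$ so that the bounds $c_1\le\|P(t)\|\le c_2$ persist. This is precisely what Lemma \ref{avlemmaexp} delivers through the near-identity change of variables $z=[I+\epsilon w(t/\epsilon,\epsilon)]\tilde z$, which transforms \eqref{avcompactformii} into a perturbation $\dot{\tilde z}=[Q_{av}+N(t/\epsilon,\epsilon)]\tilde z$ of the Hurwitz average system $Q_{av}$ with $\|N(t/\epsilon,\epsilon)\|=O(\kappa(\epsilon))$; the constant Lyapunov matrix $P_{av}$ associated with $Q_{av}$ then furnishes the required uniform bounds. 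A minor point worth recording is that Lemma \ref{avlemmaexp} presumes $\gamma$ large enough for $Q_{av}$ to be Hurwitz (Lemma \ref{lemmaexp}), so the statement of Theorem \ref{theo2} should be read with $\gamma\ge\bar{\gamma}_0$ for the $\bar{\gamma}_0$ of Lemma \ref{lemmaexp}; the concluding passage from exponential stability of \eqref{compactform2} to that of \eqref{compactform} is then the cascade/ISS step invoked in Lemma \ref{discrettimeexo}, cf.\ \cite{khalil2002nonlinear}.
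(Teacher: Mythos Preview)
Your proposal is correct and follows essentially the same route as the paper: the paper's proof of Theorem~\ref{theo2} is omitted and simply states that one combines Lemma~\ref{avlemmaexp} with Lemma~\ref{discrettimeexo} exactly as in the proof of Theorem~\ref{thm1xe}, which is precisely the reduction-plus-cascade argument you have written out. Your additional remarks on the need for $\gamma\ge\bar{\gamma}_0$ and on the uniformity in $\epsilon$ of the Lyapunov bounds are valid clarifications that the paper leaves implicit.
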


By using Lemma \ref{avlemmaexp} and Lemma \ref{discrettimeexo}, we can follow the proof of Theorem \ref{thm1xe} to establish the result of Theorem \ref{theo2}. The proof is omitted. %refore, we omit the proof. 

\section{Numerical Example}\label{numerexam}
\subsection{Example 1: Toy example}
 \begin{figure}[htbp]
 \centering
 \tikzstyle{Information} = [draw, fill=black!10, very thin,rectangle, minimum height=7.5em, minimum width=12.25em,rounded corners=.8ex]%
\subfigure[${\mathcal{G}}_{1}$]{\label{fig:a}
\begin{tikzpicture}[transform shape,scale=0.7]
  \centering%
  \path (0,0.86) node[Information](Information) {};
  \node (0) [circle, draw=red!20, fill=red!60, very thick, minimum size=7mm] {\textbf{0}};
  \node (1) [circle, right=of 0, draw=blue!20, fill=blue!60, very thick, minimum size=7mm] {\textbf{1}};
  \node (2) [circle, left=of 0, draw=blue!20, fill=blue!60, very thick, minimum size=7mm] {\textbf{2}};
  \node (3) [circle, above=of 0, draw=blue!20, fill=blue!60, very thick, minimum size=7mm] {\textbf{3}};
  \node (4) [circle, right=of 3, draw=blue!20, fill=blue!60, very thick, minimum size=7mm] {\textbf{4}};
  \node (5) [circle, left=of 3, draw=blue!20, fill=blue!60, very thick, minimum size=7mm] {\textbf{5}};
   \draw[red,thick,->] (0)--node[below]{$y_1$}(1);
   \draw[red,thick,->] (0)--node[below]{$y_2$}(2);
   \draw[red,thick,->] (0)--node[below]{$y_2$}(2);
   \draw[red,thick,->] (0)--node[left]{$y_3$}(3);
   \draw[blue,thick,<->, left] (2) edge (5);
    %\draw[blue,thick,<->, left] (5) edge (3);
   %\draw[blue,thick,->, left] (3) edge (2);
  % \draw[blue,thick,->, left] (4) edge (1);
%   \draw[blue,thick,->, left] (1) edge (3);
%   \draw[blue,thick,->, left] (3) edge (4);
  % \draw[ very thick,->, bend left] (1) edge (3);
  % \draw[ very thick,->, bend left] (2) edge (1);
  % \draw[ very thick,->, bend left] (2) edge (3);
\end{tikzpicture}
%\caption{ Communication topology $\bar{\mathcal{G}}$}
%\label{fig1numex}
}
\quad
\subfigure[${\mathcal{G}}_{2}$]{\label{fig:b}
\begin{tikzpicture}[transform shape,scale=0.7]
  \centering%
  \path (0,0.86) node[Information](Information) {};
  \node (0) [circle, draw=red!20, fill=red!60, very thick, minimum size=7mm] {\textbf{0}};
  \node (1) [circle, right=of 0, draw=blue!20, fill=blue!60, very thick, minimum size=7mm] {\textbf{1}};
  \node (2) [circle, left=of 0, draw=blue!20, fill=blue!60, very thick, minimum size=7mm] {\textbf{2}};
  \node (3) [circle, above=of 0, draw=blue!20, fill=blue!60, very thick, minimum size=7mm] {\textbf{3}};
  \node (4) [circle, right=of 3, draw=blue!20, fill=blue!60, very thick, minimum size=7mm] {\textbf{4}};
  \node (5) [circle, left=of 3, draw=blue!20, fill=blue!60, very thick, minimum size=7mm] {\textbf{5}};
   \draw[red,thick,->] (0)--node[below]{$y_1$}(1);
   \draw[red,thick,->] (0)--node[below]{$y_2$}(2);
   \draw[red,thick,->] (0)--node[below]{$y_2$}(2);
   \draw[red,thick,->] (0)--node[left]{$y_3$}(3);
   %\draw[blue,thick,<->, left] (2) edge (5);
   %\draw[blue,thick,<->, left] (5) edge (3);
   \draw[blue,thick,->, left] (3) edge (2);
   %\draw[blue,thick,->, left] (4) edge (1);
   \draw[blue,thick,->, left] (1) edge (3);
  % \draw[blue,thick,->, left] (3) edge (4);
  % \draw[ very thick,->, bend left] (1) edge (3);
  % \draw[ very thick,->, bend left] (2) edge (1);
  % \draw[ very thick,->, bend left] (2) edge (3);
\end{tikzpicture}
}

\subfigure[${\mathcal{G}}_{3}$]{\label{fig:c}
\begin{tikzpicture}[transform shape,scale=0.7]
  \centering%
  \path (0,0.86) node[Information](Information) {};
  \node (0) [circle, draw=red!20, fill=red!60, very thick, minimum size=7mm] {\textbf{0}};
  \node (1) [circle, right=of 0, draw=blue!20, fill=blue!60, very thick, minimum size=7mm] {\textbf{1}};
  \node (2) [circle, left=of 0, draw=blue!20, fill=blue!60, very thick, minimum size=7mm] {\textbf{2}};
  \node (3) [circle, above=of 0, draw=blue!20, fill=blue!60, very thick, minimum size=7mm] {\textbf{3}};
  \node (4) [circle, right=of 3, draw=blue!20, fill=blue!60, very thick, minimum size=7mm] {\textbf{4}};
  \node (5) [circle, left=of 3, draw=blue!20, fill=blue!60, very thick, minimum size=7mm] {\textbf{5}};
   \draw[red,thick,->] (0)--node[below]{$y_1$}(1);
   \draw[red,thick,->] (0)--node[below]{$y_2$}(2);
   \draw[red,thick,->] (0)--node[below]{$y_2$}(2);
   \draw[red,thick,->] (0)--node[left]{$y_3$}(3);
   %\draw[blue,thick,<->, left] (2) edge (5);
   \draw[blue,thick,<->, left] (5) edge (3);
  % \draw[blue,thick,->, left] (3) edge (2);
   \draw[blue,thick,->, left] (4) edge (1);
   %\draw[blue,thick,->, left] (1) edge (3);
   %\draw[blue,thick,->, left] (3) edge (4);
  % \draw[ very thick,->, bend left] (1) edge (3);
  % \draw[ very thick,->, bend left] (2) edge (1);
  % \draw[ very thick,->, bend left] (2) edge (3);
\end{tikzpicture}
}
\quad
\subfigure[${\mathcal{G}}_{4}$]{\label{fig:d}
\begin{tikzpicture}[transform shape,scale=0.7]
  \centering%
  \path (0,0.86) node[Information](Information) {};
  \node (0) [circle, draw=red!20, fill=red!60, very thick, minimum size=7mm] {\textbf{0}};
  \node (1) [circle, right=of 0, draw=blue!20, fill=blue!60, very thick, minimum size=7mm] {\textbf{1}};
  \node (2) [circle, left=of 0, draw=blue!20, fill=blue!60, very thick, minimum size=7mm] {\textbf{2}};
  \node (3) [circle, above=of 0, draw=blue!20, fill=blue!60, very thick, minimum size=7mm] {\textbf{3}};
  \node (4) [circle, right=of 3, draw=blue!20, fill=blue!60, very thick, minimum size=7mm] {\textbf{4}};
  \node (5) [circle, left=of 3, draw=blue!20, fill=blue!60, very thick, minimum size=7mm] {\textbf{5}};
   \draw[red,thick,->] (0)--node[below]{$y_1$}(1);
   \draw[red,thick,->] (0)--node[below]{$y_2$}(2);
   \draw[red,thick,->] (0)--node[below]{$y_2$}(2);
   \draw[red,thick,->] (0)--node[left]{$y_3$}(3);
   %\draw[blue,thick,<->, left] (2) edge (5);
  % \draw[blue,thick,<->, left] (5) edge (3);
  % \draw[blue,thick,->, left] (3) edge (2);
  % \draw[blue,thick,->, left] (4) edge (1);
  % \draw[blue,thick,->, left] (1) edge (3);
   \draw[blue,thick,->, left] (3) edge (4);
  % \draw[ very thick,->, bend left] (1) edge (3);
  % \draw[ very thick,->, bend left] (2) edge (1);
  % \draw[ very thick,->, bend left] (2) edge (3);
\end{tikzpicture}
}
\caption{ Communication topology ${\mathcal{G}}_{\sigma(t)}$}
\label{fig1numex}
\end{figure}
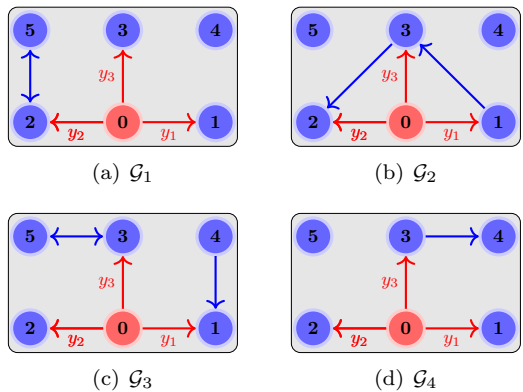
In this example, we consider a linear distributed system composed of one leader system and five followers as shown in Fig.~\ref{fig1numex}. The dynamic of the leader is in the form \eqref{leader} with
\begin{align}
A=&\left[\begin{matrix}
        0 & 2 & 0\\
       -2 & 0 & 0\\
       0 & 0 & {0.1}
       \end{matrix}
      \right],\; B=0,\;C= \begin{blockarray}{cccc}
 %  a & b & c & d & e \\
  \begin{block}{[ccc]c}
   1& 0 & 0&C_1\\
    \cmidrule(lr){1-3}
      0 &1& 0&C_2\\
      \cmidrule(lr){1-3}
      0& 0 &1&C_3\\
      \cmidrule(lr){1-3}
      0& 0& 0&C_4\\
      \cmidrule(lr){1-3}
     0& 0& 0& C_5 \\
  \end{block}
  \end{blockarray}.
    \nonumber
\end{align}
 Next, we assume that the switching network topology $\bar{\mathcal{G}}_{\sigma\left(t\right)}$ is dictated by the following switching signal:
$$\sigma\left(t\right) =
\begin{cases}
1& \text{If $sT\leq t <\left(s+\frac{1}{4}\right)T$}\\
2& \text{If $\left(s+\frac{1}{4}\right)T\leq t <\left(s+\frac{1}{2}\right)T$}\\
3& \text{If $\left(s+\frac{1}{2}\right)T\leq t <\left(s+\frac{3}{4}\right)T$}\\
4& \text{If $\left(s+\frac{3}{4}\right)T\leq t <\left(s+1\right)T$}
\end{cases}$$
where $s=0,1,2,\cdots$. The four digraphs $\bar{\mathcal{G}}_k$, $k=1,~ 2,~ 3, ~4,$ are illustrated in Fig.~\ref{fig1numex}, where the node $0$ is associated with exosystem, and the other nodes are associated with the five followers.

%Then, it can be calculated from \eqref{canshu} that $\gamma\in( 6.2628,+\infty)$.
%Let $\gamma=25$, which further infers $h\in (0,0.0203)$ from \eqref{dotV2} and \eqref{canshu}.

\begin{figure}[htbp]
 \centering
 % Requires \usepackage{graphicx}
 \epsfig{figure=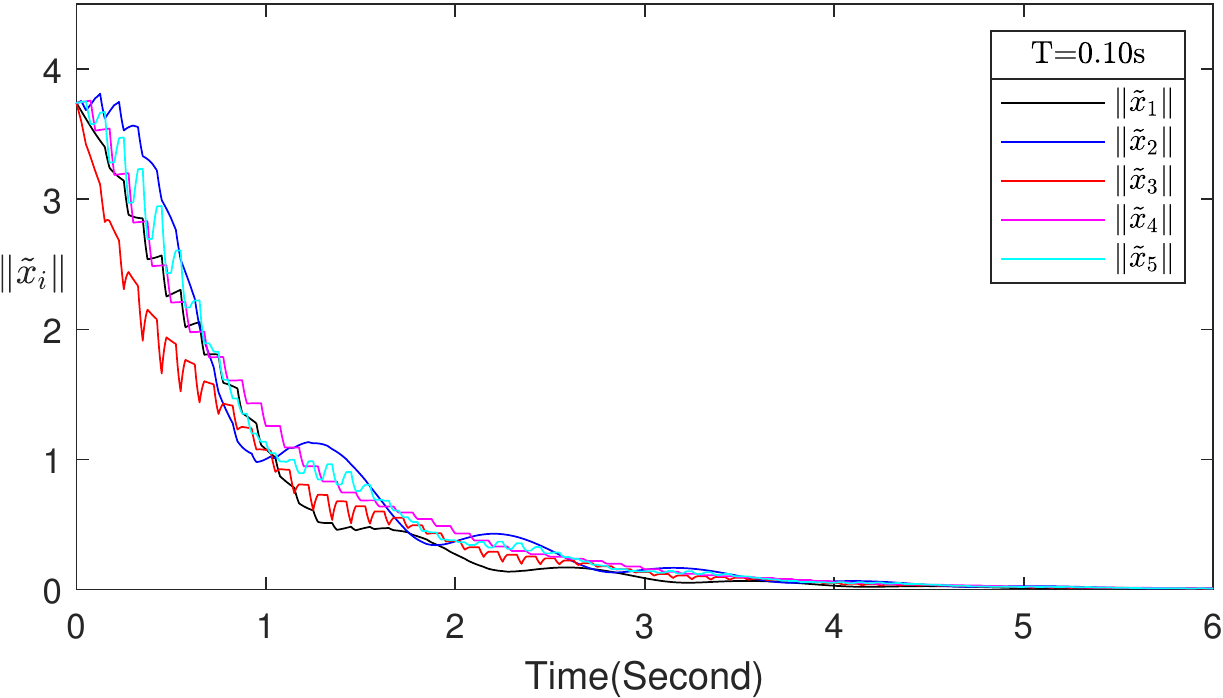,width=3.4in}
 \epsfig{figure=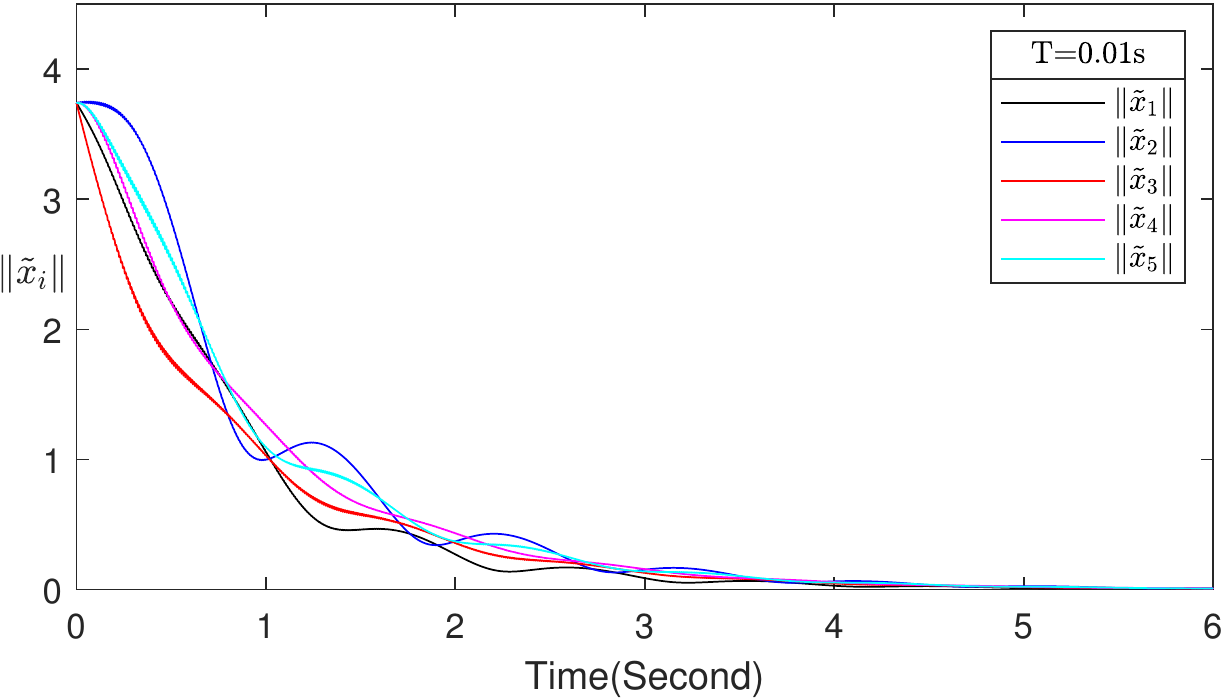,width=3.4in}
 % \caption{Tracking Error}\label{fig2i}
 \caption{Estimation errors of all nodes with different $T$, $i=1,\ldots,5$.}\label{figesmov}
\end{figure}

\begin{figure}[htbp]
 \centering
\epsfig{figure=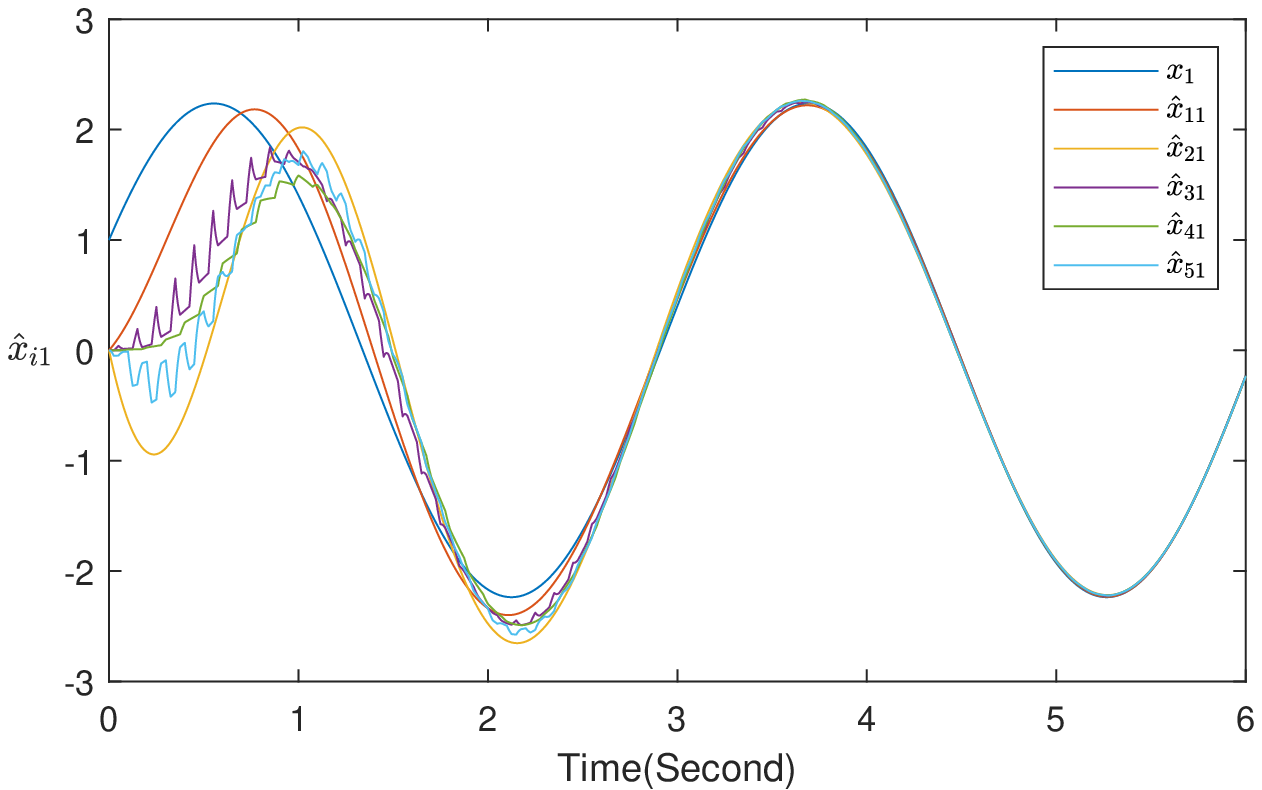,width=3.4in}
 \epsfig{figure=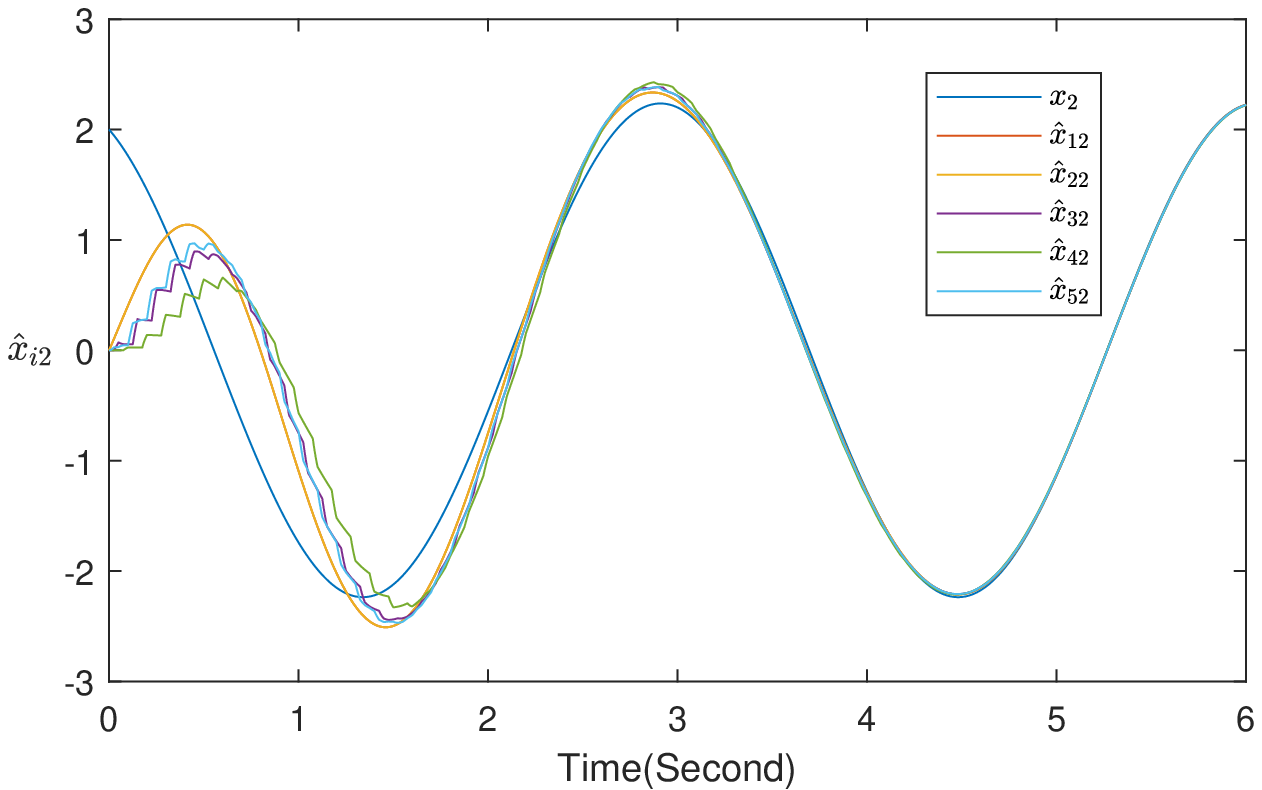,width=3.4in}
  \epsfig{figure=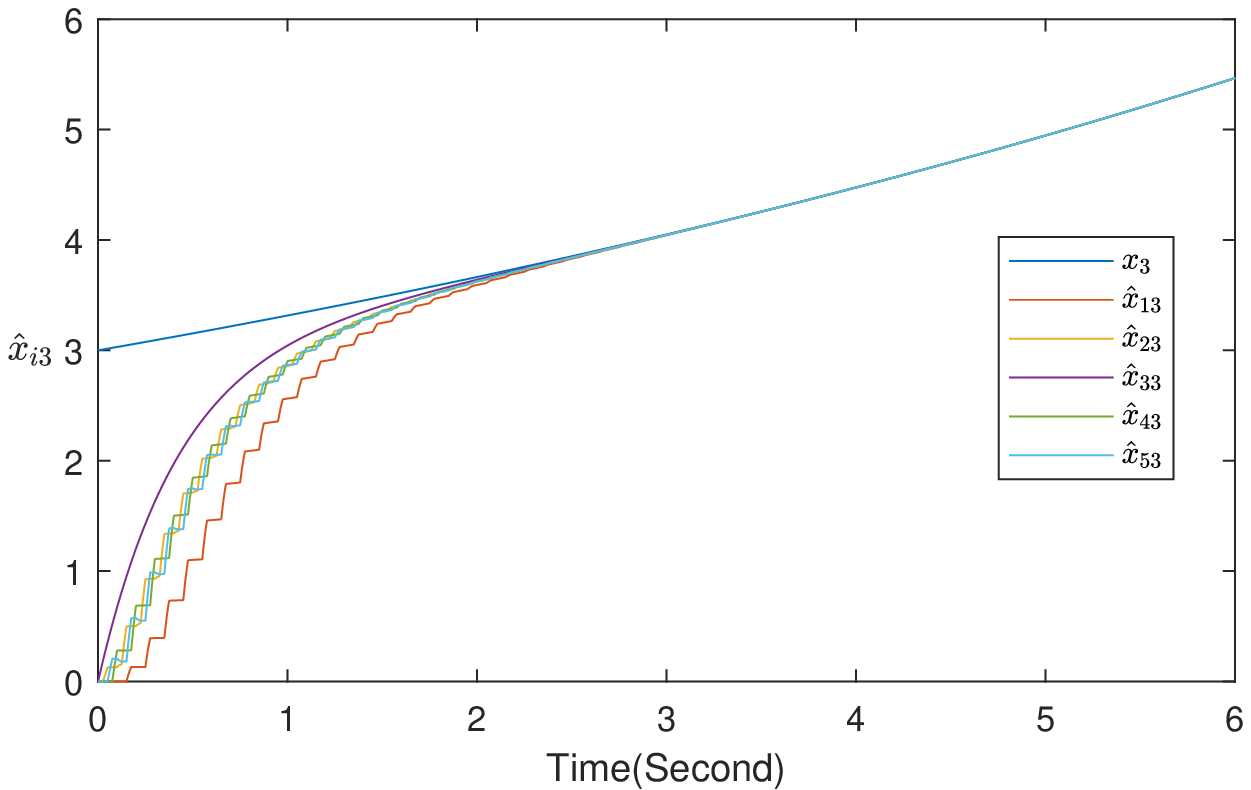,width=3.4in}
 \caption{Estimation performance of all nodes with $T=0.1$, $i=1,\ldots,5$.}\label{figesper}
\end{figure}

It can be seen that none of the pairs {\myr $(C_i, A)$} are observable, while the system to be observed is such that {\myr $(C, A)$} is observable.
Choosing the following matrix based on \eqref{decom}
\begin{align}V_1=&\left[
   \begin{matrix}
     0& 0&1\\
     0& 1&0\\
    -1&0& 0\\
   \end{matrix}
  \right],\; V_2=\left[
   \begin{matrix}
    0  & -1  & 0\\
   0 &  0 &  1\\
  -1  & 0 &  0\\
   \end{matrix}
  \right],\nonumber\\
 V_3=& \left[
   \begin{matrix}
   -1 &  0 &  0\\
   0 &  1 &  0\\
   0 &  0 &  1\\
   \end{matrix}
  \right],\;V_4=I_3,\; V_5=I_3.\nonumber \end{align}
The partition $y_1$, $y_2$ and $y_3$ of the output can be measured by agents $1$, $2$ and $3$ as shown in Fig.~\ref{fig1numex} that satisfies Assumption \ref{ass0} with
\begin{align}\mathcal{L}=\sum_{i=1}^{4}\mathcal{L}_{\rho}=\left[
    \begin{matrix}
     1 & 0 & 0 & -1& 0 \\
     0 & 2 & -1 & 0& -1 \\
     -1 & 0 & 2 & 0& -1 \\
     0 &0 & -1& 1 & 0\\
     0 &-1& 0& 0 & 1\\
    \end{matrix}
   \right].\nonumber
\end{align}
Let $\theta=10^{-3}\times\col(1,1,1,1,2)$. Hence, $\lambda_L= 1.05\times10^{-2}$ and $\lambda_l=1.3\times10^{-3}$.
Thus, we can design a control law composed of the equation \eqref{compensator} with the following parameters: $\gamma=45$,
$L_{o1}=\col(-4 -2)$, $L_{o2}=\col(-4 -2)$, $L_{o3}=-2.5$, $L_4=\col(0,0,0)$, $L_5=\col(0,0,0)$, $M_4=I_3$ and $M_5=I_3$.
Simulation is conducted with the following initial conditions:
$x(0)=\col(1,2,3)$ and $\hat{x}_i(0)=\col(0,0,0)$.

Fig.~\ref{figesmov} shows the estimation errors of all nodes over different $T$. The results demonstrate that the estimation error approaches the estimation error of the averaged estimate as the frequency of switching is increased. 
Fig.~\ref{figesper} shows the estimation performance of each agent's state over $T=0.1$. 

\subsection{Example 2: Applications to Quarter-car active automotive suspension system}

 \begin{figure}[htbp]\pagestyle{empty}
 \centering
\includegraphics[angle=270,scale=0.2]{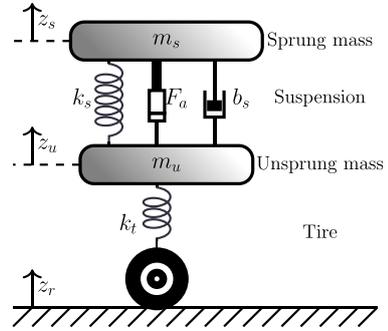}
\caption{Quarter-car active automotive suspension system}
\label{QCAASS}
\end{figure}
We borrow a modified example from \cite{yue1988alternative} and \cite{sferlazza2021state} to illustrate the design of local observers over a time-varying network.
Consider the quarter-car active automotive suspension system described in Fig.~\ref{QCAASS}, which represents the automotive system at each wheel and is monitored by two separate sensors. The system consists of a spring $k_s$, a damper $b_s$ and an active force actuator $F_a$. The sprung mass $m_s$ represents the quarter-car equivalent of the vehicle body mass. The unsprung mass $m_u$ represents the equivalent mass due to the axle and tire. The spring $k_t$ represents the vertical stiffness of the tire. The variables
$z_s$, $z_u$ and $z_r$ are the vertical displacements from static
equilibrium of the sprung mass, unsprung mass and the road, respectively.

The dynamics of this system are governed by the following system 
\begin{align}\label{qacau}
\dot{x}_c=A_cx_c+B_uF_a+B_d\dot{z}_r
\end{align}
where $F_a$ is the active force of the actuator, $\dot{z}_r$ is an input describing how the road profile enters into the system, the state vector $x_c=\col(x_{c1}, x_{c2}, x_{c3}, x_{c4})$ include the suspension deflection $x_{c1}=z_s-z_u$, the absolute velocity $x_{c2}=\dot{z}_s$ of the sprung mass $m_s$, the tire deflection $x_{c3}=z_u-z_r$ and the absolute velocity $x_{c4}=\dot{z}_u$ of the unsprung mass $m_u$. The matrices $A_c$, $B_u$ and $B_d$ are given as 
$$A_c=\left[
    \begin{array}{cccc}
     0 & 1 & 0 & -1 \\
     \frac{-k_s}{m_s} & \frac{-b_s}{m_s} & 0 & \frac{b_s}{m_s}\\
     0 & 0 & 0 & 1 \\
     \frac{k_s}{m_u} & \frac{b_s}{m_u}& \frac{-k_t}{m_u} & \frac{b_s+b_t}{-m_u} \\
    \end{array}
   \right],\begin{array}{c}
            B_u=\col\big(0,\frac{1}{m_s}, 0, \frac{-1}{m_s}\big), \\
            B_d=\col\big(0, 0, -1, \frac{b_t}{m_u}\big).
            \end{array}$$
where $m_s=240 Kg$, $m_u=36Kg$, $b_s=980 N.sec/m$, $k_s=1.6\times 10^{4}N/m$, $k_t=1.6\times 10^{5}N/m$ and $b_t=0$. 
 \begin{figure}[htbp]
 \centering
 \tikzstyle{Information} = [draw=red!20, fill=red!20, very thick,rectangle, minimum height=11.5em, minimum width=7.5em,rounded corners=.8ex]%
\subfigure[$\mathcal{G}_{1}$]{\label{figact:a}
\begin{tikzpicture}[transform shape,scale=0.7]
  \centering%
  \path (1.9,1.35) node[Information](Information) {\textbf{0}};
  \path (2,1.2) node{\includegraphics[trim=245 520 230 120,clip, scale=0.6]{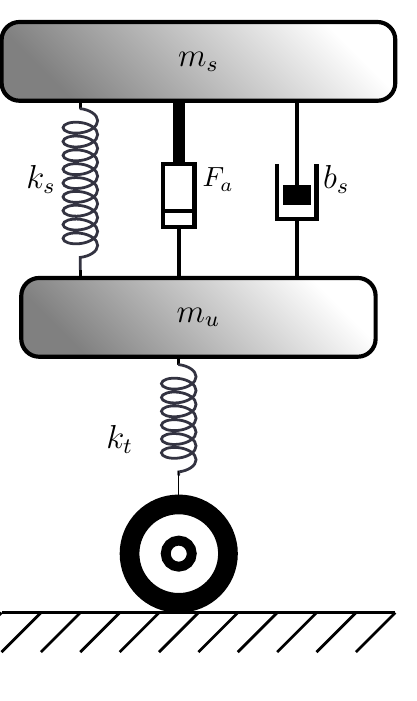}};
  \path (2.5,0.5) node{\Large \myr \textbf{0}};
  \node[circle, draw=blue!20, fill=blue!60, very thick, minimum size=7mm] at (-0.5, 0.5)  (1) {\textbf{1}};
  \node (2) [circle, above=of 1, draw=blue!20, fill=blue!60, very thick, minimum size=7mm] {\textbf{2}};
 %  \node (3) [circle, above=of 2, draw=blue!20, fill=blue!60, very thick, minimum size=7mm] {\textbf{3}};
  \draw[thick,<->, left] (2) edge (1);
  % \draw[red,thick,->, left] (1) edge node[above left]{$y_1$} (1.25,0.5);
  \draw[red,thick,<-, left] (2) edge node[above ] {$y_2$}(0.9,2.25);
  \end{tikzpicture}}
\quad
\subfigure[$\mathcal{G}_{2}$]{\label{figact:b}
\begin{tikzpicture}[transform shape,scale=0.7]
  \centering%
  \path (1.9,1.35) node[Information](Information) {};
   \path (2.5,0.5) node{\Large \myr \textbf{0}};
  \path (2,1.2)node{\includegraphics[trim=245 520 230 120,clip, scale=0.6]{figures/active.pdf}};
  \node[circle, draw=blue!20, fill=blue!60, very thick, minimum size=7mm] at (-0.5, 0.5)  (1) {\textbf{1}};
  \node (2) [circle, above=of 1, draw=blue!20, fill=blue!60, very thick, minimum size=7mm] {\textbf{2}};
 % \node (3) [circle, above=of 2, draw=blue!20, fill=blue!60, very thick, minimum size=7mm] {\textbf{3}};
 % \draw[thick,<->] (2)--(3);
  % \draw[red,thick,->, left] (1) edge node[above left]{$y_1$} (1.25,0.5);
  \draw[red,thick,<-, left] (2) edge node[above ] {$y_2$}(0.9,2.25);
\end{tikzpicture}}
\caption{ Communication topology and measurements $\mathcal{G}_{\sigma\left(t\right)}$}
\label{fig1numexac}
\end{figure}

The time-varying networks are shown in Fig.~\ref{fig1numexac}. The network structure is driven by the
switching signal:
$$\sigma\left(t\right) =
\begin{cases}
1& \text{If $sT\leq t <\left(s+\frac{1}{2}\right)T$}\\
2& \text{If $\left(s+\frac{1}{2}\right)T\leq t <\left(s+1\right)T$}\\
\end{cases}$$
where $s=0,1,2,\cdots$. 
We assume that the road profile can be viewed as a disturbance described by the following equation
\begin{subequations}\label{exosytdis}\begin{align}
\dot{v}=Sv,\\
\dot{z}_r=Fv,
\end{align}\end{subequations}
where $v\in \mathds{R}^4$ is the state of the exosystem, $F=\left[\begin{smallmatrix}2&-1&-1&2\end{smallmatrix}\right]$ and $S=\diag(2a,a)$ are the output matrix and system matrix of the exosystem with $a=\left[\begin{smallmatrix}
  0&1\\
  -1&0\\
\end{smallmatrix}\right]$. We define $x=\col(x_c,v)$ and rewrite\eqref{exosytdis} and \eqref{qacau} into the following compact form:
\begin{align}
 \dot{x}=Ax+BF_a
\end{align}
where $A=\left[\begin{smallmatrix}A_c&F\\
 0_{4\times 4}& S\end{smallmatrix}\right]$ and $B=\left[\begin{smallmatrix}B_u\\0_{4}\end{smallmatrix}\right]$. It is also assumed that the active force actuator is $F_a=0$. The variable $y_2=C_2x$ is measured by agent $2$ with $C_2^T=\col(1,0,0,0,0_4)$. Agent $1$ does not measure anything from the quarter-car active automotive suspension system. 
For this configuration, we get that $rank(\mathcal{O}_1) = 0$ and $rank(\mathcal{O}_2) = 8$. Furthermore, we confirm that {\myr $(C, A)$} is observable with $C=\col(C_1,C_2)$. Using the decomposition \eqref{decom}, we compute the transformed matrices and choose $\gamma=25$ and the vector $L_{o2}$ such that the matrix $(A_{o2}+L_{o2}C_{o2})$ at $\{-2;-4;-6;-8;-3;-9;-5;-7\}$.
The initial conditions for the simulation are generated randomly. 
 \begin{figure}[htbp]
 \centering
  \epsfig{figure=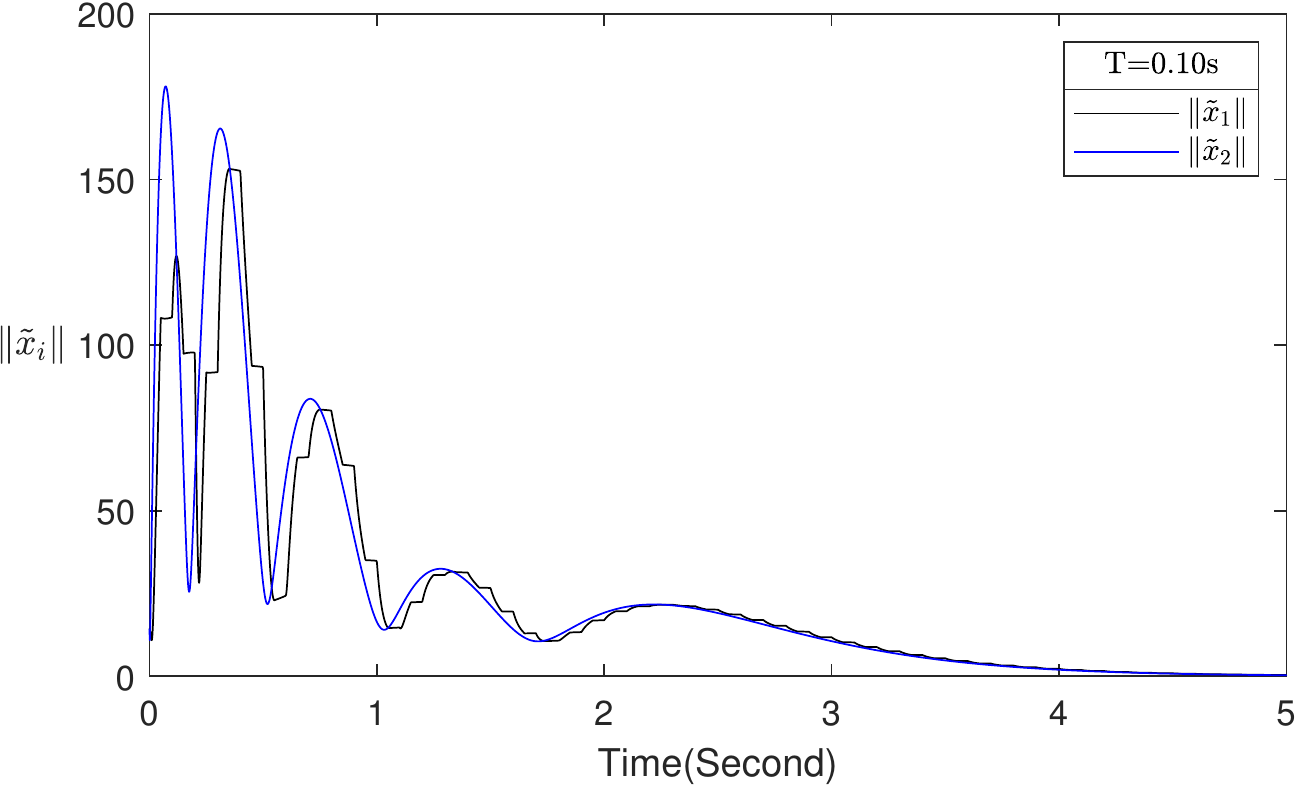,width=3.4in} 
 \epsfig{figure=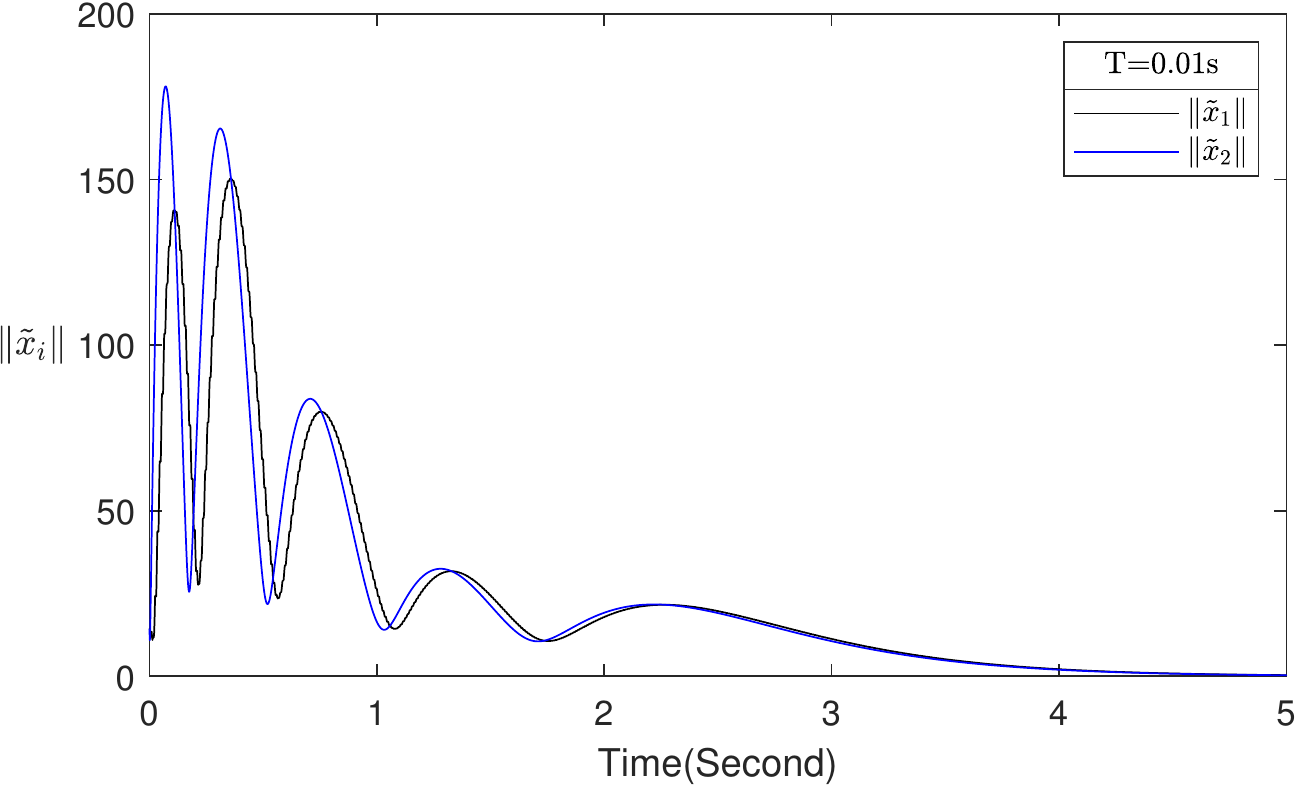,width=3.4in} 
 \caption{Estimation errors of all nodes with different $T$.}\label{figessec}
 \end{figure}
 Fig.~\ref{figessec} shows the estimation errors of all
agents. It can be observed that agent $2^{th}$ can reconstruct the state of \eqref{qacau}. It also demonstrates that the distributed observer with fast-switching communication enables the estimation for the remote agent $1^{th}$ which can approximate the averaged system.

\subsection{Example 3: Applications to the formation of unicycle-type vehicles}

We now consider formation control of $5$ flying unicycle-type vehicles with the $i^{th}$ agent's dynamics given by:
\begin{align}\label{flying}
  \dot{p}_{xi}=&\nu_i \cos (\theta_i),\; \dot{p}_{yi}=\nu_i \sin (\theta_i),\; \dot{\theta}_i=\omega_i
\end{align}
where $p_i=\col(p_{xi},p_{yi})\in \mathds{R}^2$ is the position along the $X-$ and $Y-$axis, $\theta_i$ is the attitude, $\nu_i$ is the translational speed, and $\omega_i$ is the angular velocity of the $i^{th}$ vehicle, respectively. 
 \begin{figure}[htbp]
   \centering%
  \includegraphics[trim=142 586 112 126,clip, width=3.3in]{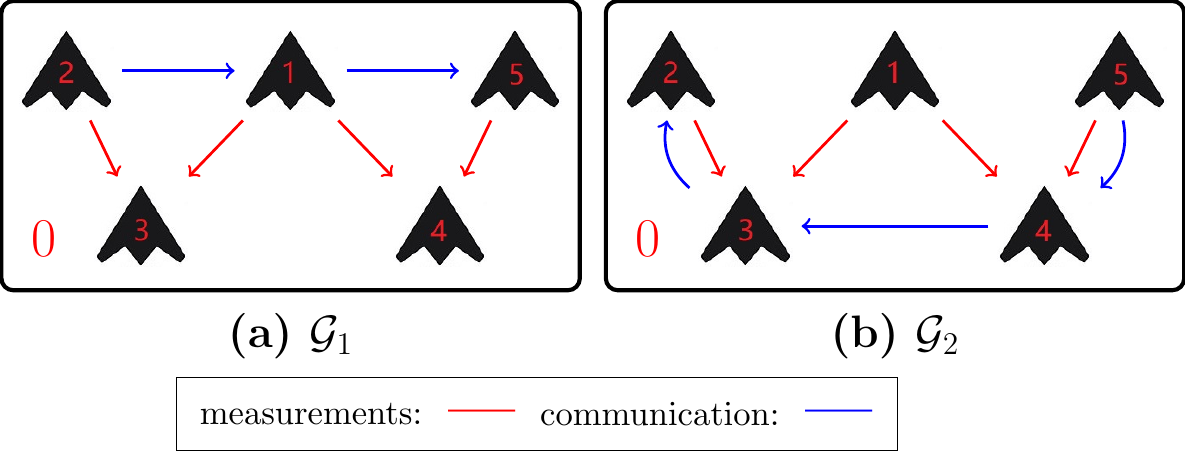}
  \caption{Communication networks $\mathcal{G}_{\sigma\left(t\right)}$ of flying vehicles}\label{flyingnet} 
  \end{figure}
  
 \begin{figure}[htbp]
   \centering%
  \includegraphics[width=3.4in]{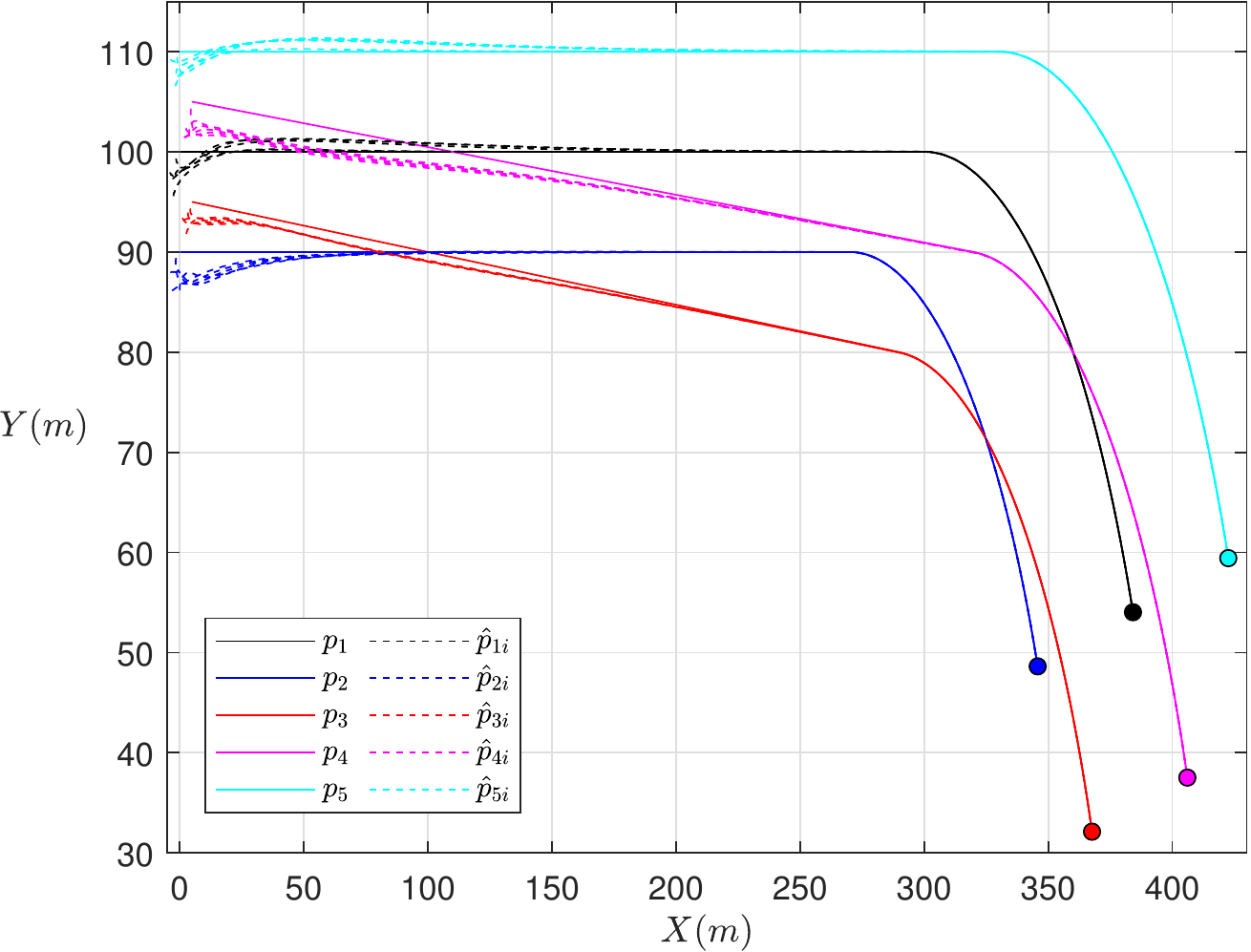}
  \caption{Position and Estimated position of all vehicles.}\label{figflyingPo}
 \end{figure}
  
 \begin{figure}[htbp]
  \epsfig{figure=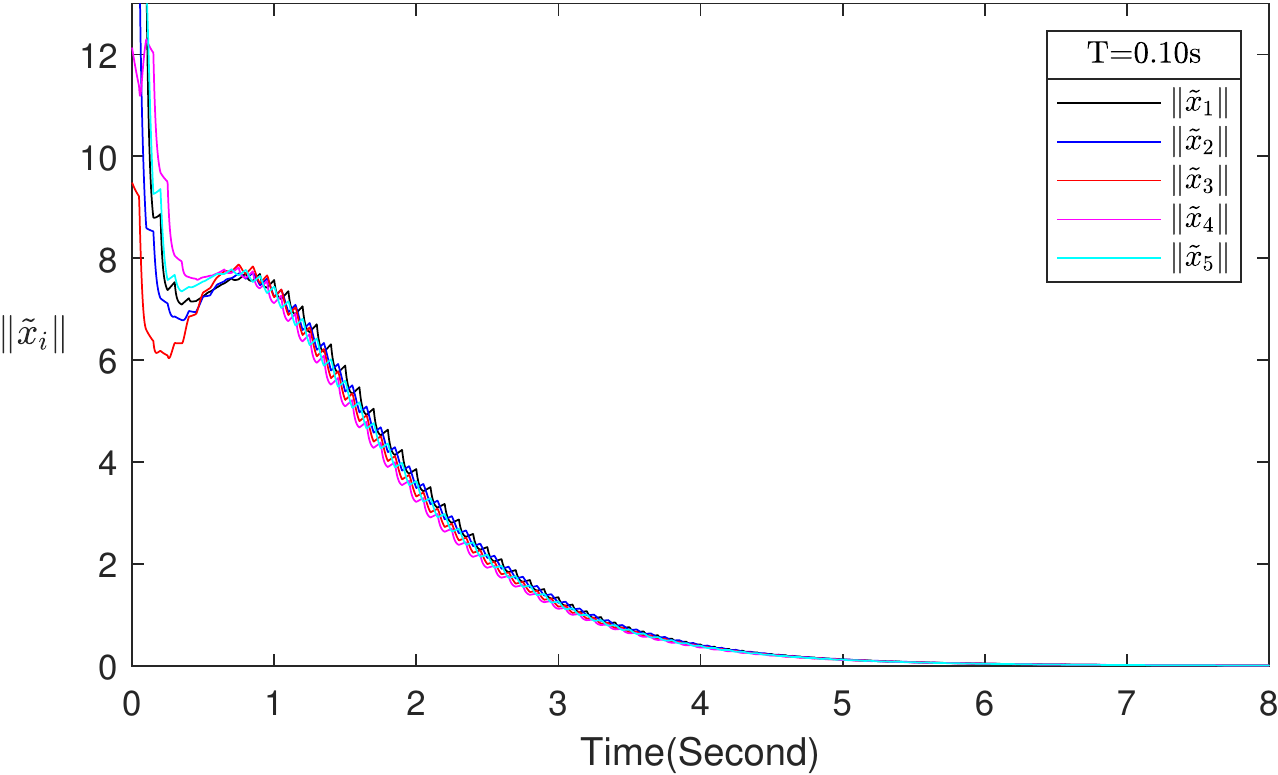,width=3.4in}
  \epsfig{figure=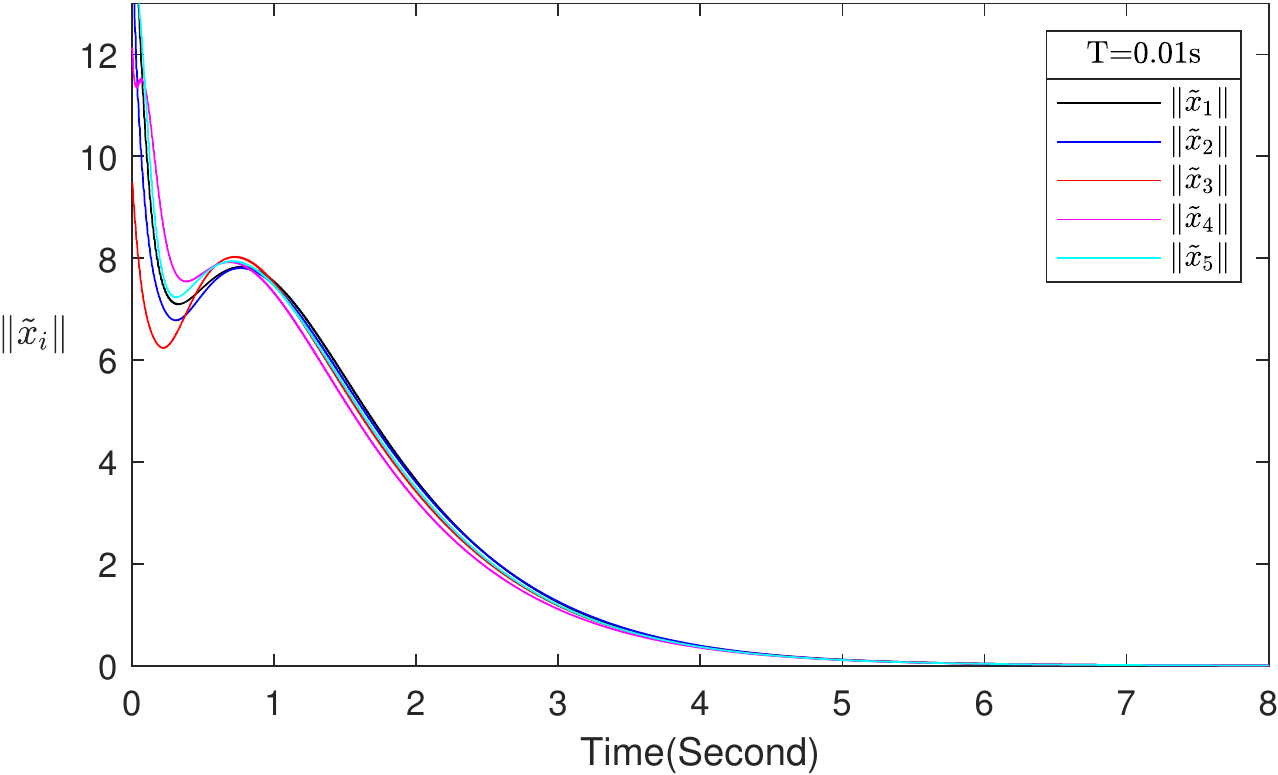,width=3.4in}
 \caption{Estimation errors of all vehicles with different $T$.}\label{figflyinges}
 \end{figure}
 Same as in \cite{kim2019completely}, we assume that the formation flight is conducted in unforced conditions for which $\nu_i$ and $\omega_i$ are constants. Let $v_i=\col(v_{xi}, v_{yi})$ represent the velocity of the $i^{th}$ vehicle along the $X-$ and $Y-$axis. Using the dynamics of \eqref{flying}, we can establish that the dynamics of each vehicle are governed by the following two systems:
\begin{align*}
\left[\begin{matrix}
\dot{\theta}_i\\
\dot{\omega}_i\\
\end{matrix}\right]=&\left[\begin{matrix}
0&1\\
0&0\\
\end{matrix}\right]\left[\begin{matrix}
{\theta}_i\\
{\omega}_i
\end{matrix}\right],\\
\left[\begin{matrix}
\dot{p}_i\\
\dot{v}_i\\
\end{matrix}\right]=&\left[\begin{matrix}
0_{2\times 2}&I_2\\
0_{2\times 2}& \omega_i a\\
\end{matrix}\right]\left[\begin{matrix}
{p}_i\\
{v}_i
\end{matrix}\right],
\end{align*}
where $a=\left[\begin{smallmatrix}
  0&1\\
  -1&0\\
\end{smallmatrix}\right]$. We assume that each vehicle has knowledge of its altitude and angular velocity $\omega_i$. 
Each vehicle has only limited measurement ability . For example, as shown in Fig.~\ref{flyingnet}, each vehicle can access its own position and velocity. However, only agents $3$ and $4$ can measure their relative position to other agents. %due to its position and limitations in its sensor performance
 The communication network structure and the local measurements for the network of vehicles are shown in Fig.~\ref{flyingnet}. We define $x=\col(p_1,v_1,\cdots,p_5,v_5)\in \mathds{R}^{20}$. The network of multi-vehicles can be written in the form of system \eqref{leader} with
\begin{align*}
A=& \diag(A_1,A_2,A_3,A_4,A_5),\;
C_1=\left[\begin{matrix}1&0&0&0&0\end{matrix}\right] \otimes S,\\
C_2=&\left[\begin{matrix}0&1&0&0&0\end{matrix}\right]\otimes S,\;
C_3=\left[\begin{matrix}1&1&1&0&0\end{matrix}\right]\otimes S,\\
C_4=&\left[\begin{matrix}1&0&0&1&1\end{matrix}\right]\otimes S,\;
C_5=\left[\begin{matrix}0&0&0&0&1\end{matrix}\right]\otimes S,
\end{align*}
where $A_i=\left[\begin{smallmatrix}
0_{2\times 2}&I_2\\
0_{2\times 2}& \omega_i\times a\\
\end{smallmatrix}\right] $ and $S=\left[\begin{matrix}I_2&0_{2\times 2}\end{matrix}\right]$ are the system and output matrix of $i^{th}$ vehicle, $C_i$ reflects the measurement ability of $i^{th}$ vehicles in terms of other vehicles, $A$ is system matrix of the system to be observed. We can also verify that $rank(\mathcal{O}_1) = 4$, $rank(\mathcal{O}_2) = 4$, $rank(\mathcal{O}_3) = 4$, $rank(\mathcal{O}_4) = 4$ and $rank(\mathcal{O}_5) = 4$. As a result, no vehicle can observe the global information despite the fact that the pair {\myr $(C, A)$} is observable with $C=\col(C_1, C_2, C_3, C_4, C_5)$. Using the proposed observer, each vehicle can establish distributed observer to estimate the global information $x$ over the time-varying networks shown in Fig.~\ref{flyingnet} with switching signal:
$$\sigma\left(t\right) =
\begin{cases}
1& \text{If $sT\leq t <\left(s+\frac{1}{2}\right)T$}\\
2& \text{If $\left(s+\frac{1}{2}\right)T\leq t <\left(s+1\right)T$}\\
\end{cases}$$
where $s=0,1,2,\cdots$. 
 %\begin{figure}[htbp]
 %\centering
 %  \epsfig{figure=figures/position_omega_05.eps,width=3.3in}
 %  \caption{Position and Estimated position of all vehicles with $\omega_i=0.5$.}
 %  \quad
%\epsfig{figure=figures/est_omega_05.eps,width=3.3in}
% \caption{Estimation errors of all vehicle swith $\omega_i=0.5$.}\label{figflyinges}
% \end{figure}

As above, the tuning parameters are chosen using the Kalman decomposition \eqref{decom} for different values of $\omega_i$. The simulation starts with angular velocity $\omega_i=0$. At the time $t=6s$, the angular velocities of the vehicles are changed from $\omega_i=0$ to $\omega_i=0.5$. Fig.~\ref{figflyingPo} shows the estimated positions and positions of all followers. Fig.~\ref{figflyinges} shows the estimation errors of all followers. We compute the transformed matrices and choose $\gamma=25$ and the vector $L_{0i}$ such that the matrix $(A_{oi}+L_{oi}C_{oi})$ has eigenvalues at $\{-1,-2,-3,-4\}$.

\subsection{Example 4: Applications to position estimation in vehicular platoons of multiple connected and automated vehicles (CAV)}

 \begin{figure}[htbp]
   \centering%
  \includegraphics[trim=162 535 163 128,clip, width=3.3in]{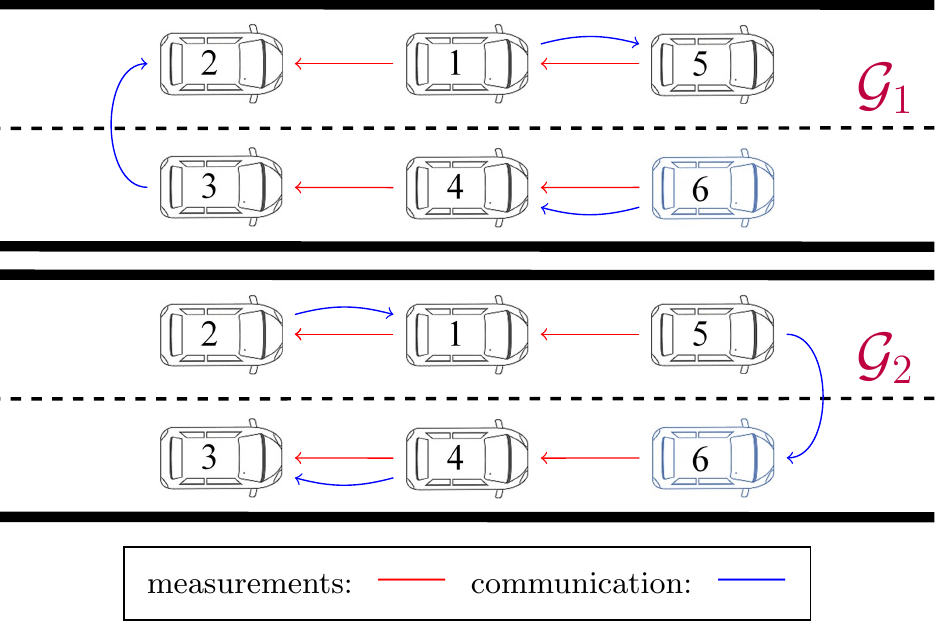}
  \caption{Connected autonomous vehicles' communication networks $\mathcal{G}_{\sigma\left(t\right)}$}\label{CAVnet} 
  \end{figure}  
We now apply the distributed observer to the vehicular platoons control of multiple connected and automated vehicles with dynamics given by
\begin{align*}
\left[\begin{matrix}
\dot{p}_i\\
\dot{v}_i\\
\end{matrix}\right]=&\left[\begin{matrix}
0_{2\times 2}&I_2\\
0_{2\times 2}& 0_{2\times 2}\\
\end{matrix}\right]\left[\begin{matrix}
{p}_i\\
{v}_i
\end{matrix}\right]+\left[\begin{matrix}
0_{2\times 1}\\
u_{i}\\
\end{matrix}\right],
\end{align*}
where $p_i=\col(p_{xi},p_{yi})\in \mathds{R}^2$ and $v=\col(v_{xi},v_{yi})\in \mathds{R}^2$ are the position and velocity of the $i^{th}$ CAV along the $X-$ and $Y-$axis, respectively; $u_i\in \mathds{R}^2$ is the control input of the $i^{th}$ CAV along the $X-$ and $Y-$axis.
  \begin{figure}[htbp]
  \centering
  \epsfig{figure=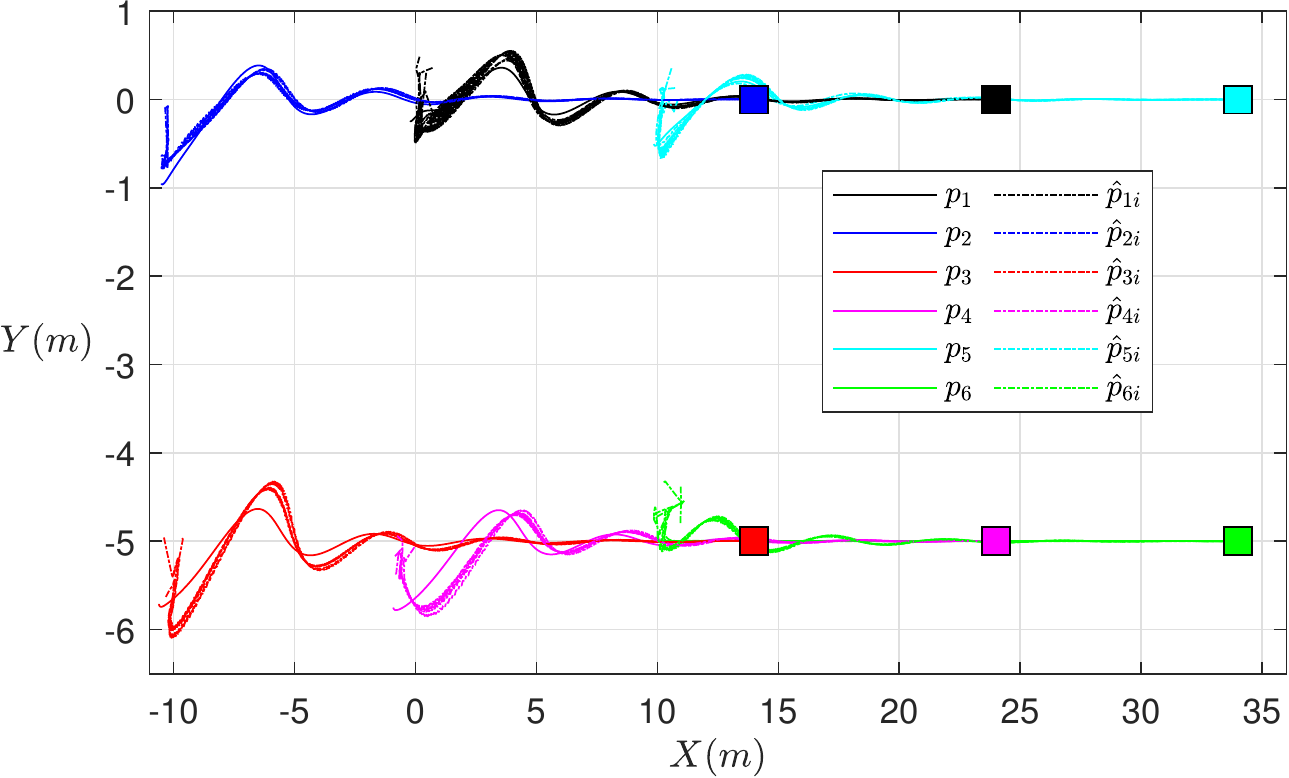,width=3.4in}
  \caption{Position and Estimated position of all vehicles.}\label{transingPo}
 \end{figure}
 
  \begin{figure}[htbp]
  \epsfig{figure=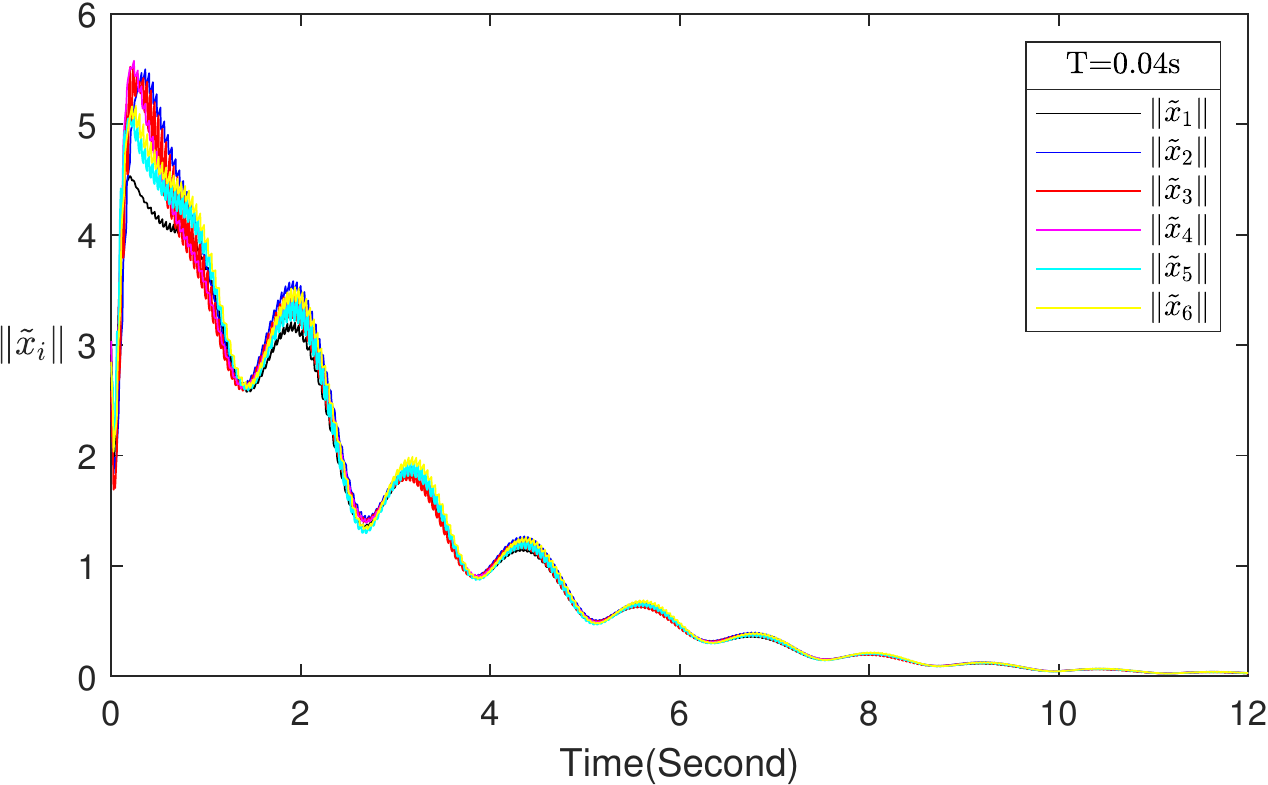,width=3.4in}    
  \epsfig{figure=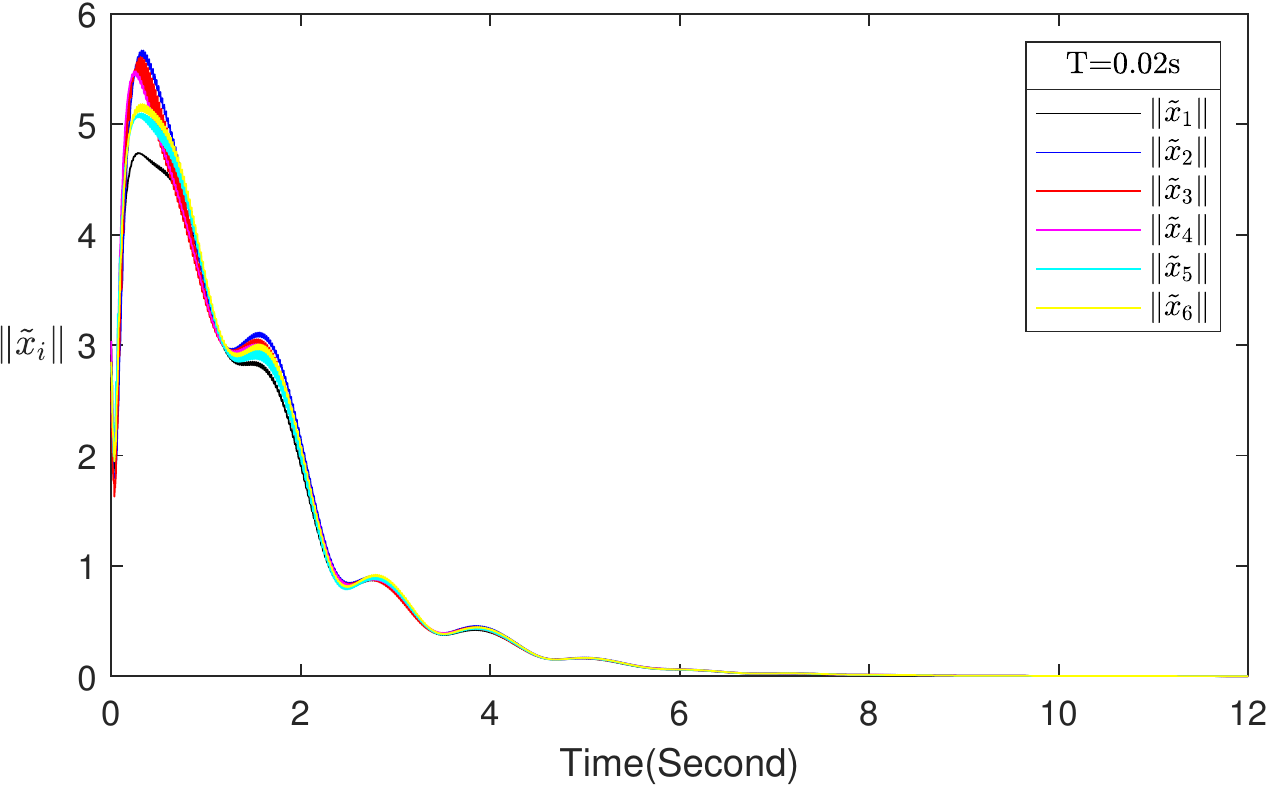,width=3.4in}
 \caption{Estimation errors of all vehicles with different $T$.}\label{transinges}
 \end{figure}
 
Each vehicle has limited measurement ability. As shown in Fig.~\ref{CAVnet}, each vehicle can access its own position and velocity, but only a vehicle following another one can measure its relative position to the vehicle ahead.
The network of all vehicles, their local measurements and communication network structure is shown in Fig.~\ref{flyingnet}. In this case, the state of the network is denoted by $x=\col(p_1,v_1,\cdots,p_6,v_6)\in \mathds{R}^{24}$ with dynamics \eqref{leader} with system matrix
$A=I_5\otimes \left[\begin{smallmatrix}
0_{2\times 2}&I_2\\
0_{2\times 2}& 0_{2\times 2}\\
\end{smallmatrix}\right]$,
and with local measurement matrices
$C_1=\left[\begin{matrix}1&0&0&0&1&0\end{matrix}\right] \otimes S$,
$C_2=\left[\begin{matrix}1&1&0&0&0&0\end{matrix}\right]\otimes S$,
$C_3=\left[\begin{matrix}0&0&1&1&0&0\end{matrix}\right]\otimes S$,
$C_4=\left[\begin{matrix}0&0&0&1&0&1\end{matrix}\right]\otimes S$,
$C_5=\left[\begin{matrix}0&0&0&0&1&0\end{matrix}\right]\otimes S$ and $C_6=\left[\begin{matrix}0&0&0&0&0&1\end{matrix}\right]\otimes S$.
where $S=\left[\begin{matrix}I_2&0_{2\times 2}\end{matrix}\right]$. Each $C_i$ reflects the measurement ability of $i^{th}$ vehicle.

We can also verify that $rank(\mathcal{O}_1) = 4$, $rank(\mathcal{O}_2) = 4$, $rank(\mathcal{O}_3) = 4$, $rank(\mathcal{O}_4) = 4$, $rank(\mathcal{O}_5) = 4$, $rank(\mathcal{O}_6) = 4$ and $rank(\mathcal{O}) = 24$ with $C=\col(C_1, C_2, C_3, C_4, C_5, C_6)$. As a result, no vehicle can reconstruct the state of other vehicles without coordination. 

Suppose that vehicle $6$ leads the vehicular network and that the following reference signal describes its position:
 $$p_6(t)=\col(v_{6x}t+p_{6x},v_{6y}t+p_{6y}).$$
 The control objective is to control each vehicle such that 
 $$ \lim_{t\rightarrow\infty}\left(p_i(t)-p_6(t)\right)=p_{i6}$$
 where $p_{i6}\in \mathds{R}^2$ are the relative positions between $i^{th}$ vehicle and vehicle $6$. The control
 input is chosen as
 $$u_i=K(x_i-\hat{x}_{6i}-\col(p_{i6},0,0))$$
 where $K^T=\col(-8,-4)\otimes I_2$, $x_i=\col(p_i,v_i)$ and $\hat{x}_{i6}=\col(\hat{p}_{i6},\hat{v}_{i6})$ is the vector of estimates of the states variables of vehicle 6 for the $i^{th}$ vehicle. We compute the transformed matrices using the Kalman decomposition \eqref{decom}, and choose $\gamma=250$ and the vector $L_{oi}$ such that the matrix $(A_{oi}+L_{oi}C_{oi})$ at $\{-1,-2,-3,-4\}$.
Fig.~\ref{transingPo} shows the estimated positions and positions of all followers. Fig.~\ref{transinges} shows the estimation errors of all followers.
Each vehicle has limited measurement ability and cannot measure the position of all the cars ahead. As we can see from Fig.~\ref{transingPo}, the use of the time-varying communication networks enables the reconstruction of the state of all vehicles from all vehicles in the network Fig.~\ref{CAVnet}. %This could prove to be valuable in the design of automated highway systems.

\section{Conclusion}\label{conlu}
A distributed observer for a general linear time-invariant system is proposed to estimate the state of the system over time-varying jointly observable networks. An averaging approach is proposed to facilitate the analysis of the distributed observer scenarios that meet an uniformly connected on average assumption. This assumption is met by a wide range of possible switching protocols, including periodic switching, Markovian switching and Cox process switching. It is shown that the distributed observer provides exponential convergence of the state estimates for each agent. 
%
%The results allow the system to be in a more general form and the communication network to be disconnected at every instant, which can naturally include the connected static networks or every-time connected switching networks as a special case.
%
%
%Some sufficient conditions for the existence of distributed observers for general linear systems over periodic communication networks have been presented.
%
%
%As a result, all the nodes in the networks cooperate to reconstruct the state of the to-be-observed system and share the estimated state with their neighbours through local communication.
%
%Furthermore, 
The study provides two exponential stability results for two classes of switched systems, providing valuable tools for future research in related distributed state estimation fields.

\noindent
\bibliographystyle{ifacconf}% plainnat elsarticle-harv
\bibliography{myref}\end{sloppypar}
\end{document}